\DeclareMathOperator*{\argmin}{arg\!\min}
\DeclareMathOperator*{\argmax}{arg\!\max}
\newtheorem{definition}{Definition}
\newtheorem{example}{Example}
\newtheorem{theorem}{Theorem}
\newtheorem{lemma}{Lemma}
\newtheorem{corollary}{Corollary}
\begin{document}

\title{Selfish Knapsack} 

\date{}

\author{
Itai Feigenbaum\thanks{IEOR Department, Columbia University, New York,
NY;
{\tt itai@ieor.columbia.edu}} \and
Matthew P. Johnson\thanks{Department of Mathematics and Computer Science, Lehman College and PhD Program in Computer Science, Graduate Center, CUNY, New York,
NY;
{\tt mpjohnson@gmail.com}
}
}
\maketitle

\begin{abstract}
We consider a selfish variant of the knapsack problem. In our version, the items are owned by agents, and each agent can misrepresent the set of items she owns---either by avoiding reporting some of them (understating), or by reporting additional ones that do not exist (overstating). Each agent's objective is to maximize, within the items chosen for inclusion in the knapsack, the total valuation of her own chosen items. The knapsack problem, in this context, seeks to minimize the worst-case approximation ratio for social welfare at equilibrium. We show that a randomized greedy mechanism has attractive strategic properties: in general, it has a correlated price of anarchy of $2$ (subject to a mild assumption). For overstating-only agents, it becomes strategyproof; we also provide a matching lower bound of $2$ on the (worst-case) approximation ratio attainable by randomized strategyproof mechanisms, and show that no deterministic strategyproof mechanism can provide any constant approximation ratio. We also deal with more specialized environments. For the case of $2$ understating-only agents, we provide a randomized strategyproof $\frac{5+4\sqrt{2}}{7} \approx 1.522$-approximate mechanism, and a lower bound of $\frac{5\sqrt{5}-9}{2} \approx 1.09$. When all agents but one are honest, we provide a deterministic strategyproof $\frac{1+\sqrt{5}}{2} \approx 1.618$-approximate mechanism with a matching lower bound. Finally, we consider a model where agents can misreport their items' properties rather than existence. Specifically, each agent owns a single item, whose value-to-size ratio is publicly known, but whose actual value and size are not. We show that an adaptation of the greedy mechanism is strategyproof and $2$-approximate, and provide a matching lower bound; we also show that no deterministic strategyproof mechanism can provide a constant approximation ratio.
\end{abstract}

\section{Introduction}

In the classical knapsack problem, given a set of items, where each item has a certain size and value, the objective is to choose a subset of those items in order to maximize total value subject to a total size constraint. However, there are applications of the knapsack problem where the data regarding the items is not simply available, but rather is privately held by agents who can manipulate the data in order to advance their own goals. For example, consider scheduling MRI appointments at a hospital on a given day. Each patient has some required scan duration, and derives a certain utility from it, which represents the health benefit from her scan. The hospital would naturally like to maximize the total utility subject to the time the MRI machine is available during the day. However, the hospital consists of many doctors (or alternatively departments), each is concerned with the benefit of their own patients. Thus, if a doctor believes she can increase her patients' total utility by not ordering some necessary but less urgent scans, or by booking some unnecessary (but harmless) scans, she might do so. It may seem counterintuitive that such manipulation could be beneficial to the doctor's patients, but as we will see, it can be quite beneficial to them and quite harmful to the welfare of the other patients. It should be noted that manipulation due to selfish behavior in the healthcare system is a real problem. One example concerns transplant centers/hospitals withholding information (hiding patients) from kidney exchange programs in order to increase the number of transplants performed locally or the number of their own patients getting a kidney \cite{ashlagi2014free, hajaj2015strategy, ashlagi2013mix}. Another example is the influence of financial interests on ordering medical tests and performing medical procedures \cite{dai2012imaging, mitchell2010effect}.

Following the above discussion, we consider a variant of the knapsack problem, which we will now informally discuss (formal definitions are given in Section \ref{sec:model}). In our variant, the items are owned by a set of agents (every item is owned by a unique agent, but an agent may own multiple items). A mechanism designer needs to choose a solution, namely a subset of the items with total size less than or equal the knapsack's capacity. The designer would ``like" to choose a solution which maximizes the total value. However, each agent's set of items is private information known only to that agent, and reported to the designer by her. Furthermore, each agent is a selfish utility maximizer, and gets a utility equal to the value of her own items chosen for inclusion in the knapsack, that is the total value of the intersection between her (true) set of items and the solution. To maximize her utility, each agent can misreport her set of items. Specifically, an agent can avoid reporting some of her existing items, as well as reporting fake items which she does not actually own (she will derive no direct value from a fake item being chosen, but reporting it might increase the total value of her chosen true items).\footnote{In most of our paper, we take a somewhat untraditional approach by having the manipulated information being the existence of the items rather than their properties, such as values and sizes. Nevertheless, this approach exists in other problems as well. For example, the manipulation done by transplant centers in kidney exchange problems \cite{ashlagi2013mix} is of this type; another example can be found in \cite{dughmi2010truthful}. In the particular example of MRI, doctors not being able to manipulate the value of a scan might stem, for example, from the value being unrelated to medical considerations, and simply represent the time the agent has been waiting for a scan. Alternatively, it could stem from having a central authority that determines the medical value of a scan, but---due to the large number of patients---does not evaluate patients in advance, but rather relies on the doctors to report the need for a scan. The latter is particularly likely if the machine is externally and publicly owned, and receives requests for scans from private hospitals.} Due to this, we cannot simply assume the agents will report truthfully. Consider the following example:

\begin{example}\label{ex:abcd}
Consider the case where the mechanism designer always chooses the optimal solution to the knapsack problem based on the reports. Define the following instances of our problem. In all these instances, the knapsack's capacity is $1$, and there are two agents. The items in our instances will be taken from the following four: items $a$, $b$ and $c$, each of value $\frac{3}{4}$ and size $\frac{1}{2}$, and item $d$, of value and size $1$.

\begin{enumerate}
\item \underline{Instance 1}: agent $1$ owns item $a$, agent $2$ owns items $c$ and $d$. The optimal solution is to include items $a$ and $c$ in the knapsack.
\item \underline{Instance 2}: agent $1$ owns item $a$, agent $2$ owns item $d$. The optimal solution is to include item $d$ in the knapsack.
\item \underline{Instance 3}: agent $1$ owns items $a$ and $b$, agent $2$ owns item $d$. The optimal solution is to include items $a$ and $b$ in the knapsack.
\end{enumerate}

If the true instance is 1, agent $2$ has an incentive to not report item $c$. In this situation, the optimal solution to instance 2 will be chosen, which increases agent $2$'s utility from $\frac{3}{4}$ to $1$. If, on the other hand, the true instance is 2, agent $1$ has an incentive to report that she also owns item $b$. In this case, the solution to instance 3 would be chosen. While agent $1$ will derive no benefit from the inclusion of item $b$ in the solution, she will derive benefit from the inclusion of item $a$ in it, increasing her utility from $0$ to $\frac{3}{4}$.
\end{example}

Due to the strategic nature of the problem, we will be looking for mechanisms that achieve ``good" results, in some sense, in game-theoretic equilibrium. We will be considering two kinds of equilibria: dominant strategy equilibrium, where no agent can benefit from reporting differently regardless of the reports of the other agents, and Bayes-Nash equilibrium (BNE), where no agent can benefit from reporting differently if no other agent reports differently. Generally, achieving optimality at equilibrium will not be possible, and so we will be looking for mechanisms that approximate optimality well. Specifically, we will be looking for mechanisms that, at equilibrium, provide a small worst-case approximation ratio.

Our paper is part of a growing literature on the subject of approximate mechanism design without money, which originated in a paper by Procaccia and Tennenholtz \cite{DBLP:journals/teco/ProcacciaT13}. This approach has been applied to many types of problems, such as matching \cite{dughmi2010truthful}, facility location \cite{DBLP:journals/mor/AlonFPT10, feldman2013strategyproof, DBLP:journals/corr/abs-1305-2446}, cake cutting \cite{maya2012incentive} and kidney exchange \cite{ashlagi2013mix}. The most relevant paper we could find is by Chen et al. \cite{DBLP:journals/corr/abs-1104-2872}. In (part of) their paper, they consider what we call in our paper the ``understating model" (where agents cannot report non-existent items), but there is no overlap in their results and ours. They provide a randomized strategyproof mechanism in that model, but at a great cost to the quality of approximation. Also related is the ``Funding Games'' model of Bar-Noy et al. \cite{bar2012funding}. In that problem, the objective is again to maximize the total value of the chosen items, but the agents wish to maximize the {\em size} of their chosen items. They provide a 2-approximate mechanism for a setting in which each agent reports {\em one} of her items, specifically its true size and a lower bound on its value.

\vskip .1cm\noindent{\bf Contributions.} We provide positive and negative results for three models: the understating model---where agents can hide items but cannot report fake ones, the overstating model---where agents can report fake items but cannot hide items, and the full model---where agents can manipulate the data both by hiding items and reporting fake ones. We also consider two specialized environments: a duopoly, i.e. when the number of agents is $2$, and the one-bad-apple environment, where all agents but one are assumed to be honest. In terms of BNE, we analyze a mechanism called HALF-GREEDY and show that it \emph{provides a correlated price of anarchy of $2$ in the understating model}; this remains the case for the full model, under a mild assumption. In terms of dominant strategy equilibria, we summarize our results in Table \ref{tbl:bounds}.

\begin{table}[t!]
\begin{center}
\caption{\small Approximation Ratio Bounds for Strategyproof Mechanisms}
\begin{tabular}{ c | c | c | c || c }
  & \small{Overstating} & \small{Understat.} ($n=2$) & \small{Overstat. (1-bad-apple)} & \small{KQUS}\\\hline
\small{Rand. UB}  & $2$ &  $\frac{5+4\sqrt{2}}{7} \approx 1.522$ &$\frac{1+\sqrt{5}}{2} \approx 1.618$ & $2$\\
\small{Rand. LB}  & $2$ & $\frac{5\sqrt{5}-9}{2} \approx 1.09$ &$\frac{5\sqrt{5}-9}{2} \approx 1.09$ & $2$\\
\small{Det. UB} & -- & -- & $\frac{1+\sqrt{5}}{2} \approx 1.618$ & --\\
\small{Det. LB}  & $\infty$ & $\frac{1+\sqrt{5}}{2} \approx 1.618$ & $\frac{1+\sqrt{5}}{2} \approx 1.618$ & $\infty$
\end{tabular}\label{tbl:bounds}
\end{center}
\end{table}

It is worth noting that our positive result for the overstating model is also due to the HALF-GREEDY mechanism. We also show that HALF-GREEDY can be modified to be strategyproof and $2$-approximate in a different problem, 'Known Quality Unknown Size' (KQUS), where each agent owns a single item whose value-to-size ratio is publicly known, but its actual value and size are not.

The rest of our paper is organized as follows. In Section \ref{sec:model} we introduce our model and formalize the above discussion. In Section \ref{sec:halfgreedy}, we discuss HALF-GREEDY and its strategic properties. Sections \ref{sec:equtility} and \ref{sec:pacifytheliar} are dedicated to the specialized environments of a duopoly and one-bad-apple, respectively. In Section \ref{sec:lowerbounds}, we provide some lower bounds on the approximation ratios strategyproof mechanisms can accomplish; among other things, these lower bounds imply the optimality of HALF-GREEDY in the overstating model, and the optimality of our specialized mechanism for the case where all but one agent are honest. In Section \ref{sec:kqus}, we consider KQUS. We show that an adaptation of HALF-GREEDY is strategyproof and $2$-approximate there, and provide a matching lower bound; we also show that no deterministic strategyproof mechanism can provide a constant approximation ratio. Finally, in Section \ref{sec:future} we propose directions for future research. In addition, we provide some results and proofs omitted from the paper in the appendix.

\section{Model}\label{sec:model}

Let $C \in \mathbb{R}_{+}$ be the knapsack's capacity. Let $N=\{1,2,\ldots,n\}$ be a set of agents, $n \geq 2$. Each agent $i$ has a ground set of items $G_i$ (informally, the set of items agent $i$ can \emph{potentially} own). Each item $a \in G_i$ has size $s(a)$ and value $v(a)$ where $s(a) \in (0,C]$ and $v(a) \in (0,\infty)$.\footnote{The fact that our formulation allows us to distinguish between items with identical size, value and owner is a mere convenience. All of our results translate to a model where such items are indistinguishable (in such a model, when reporting fake items is allowed, we assume that real items are chosen before fake ones in HALF-GREEDY; this is justified by the view that the mechanism designer's choice in this model is the number of spots $q_{i,s,v}$ to allocate to agent $i \in N$ for items of size $s$ and value $v$, while agent $i$ chooses which ``specific" items to put in those spots).} We assume that $G_i \cap G_j=\varnothing$ for $i \neq j$. In the special case where for every $s \in (0,C]$, $v \in (0,\infty)$, $G_i$ contains infinitely many items of size $s$ and value $v$ (for each $i \in N$), we call the ground sets {\it unrestricted}. For technical convenience, we assume the existence of a given total order\footnote{It is easily seen that our results can be rewritten without being given this order in advance, but having it simplifies presentation as it eliminates the need for tie-breaking procedures which essentially decide such an order.} $\succeq$ on $\cup_{i \in N}G_i$. For a set $A$, let us denote the collection of all finite subsets of $A$ as $\widehat{A}$. Every agent $i$ has a finite true set of items $X_i \in \widehat{G_i}$; also, let ${\bf X}=(X_1,\ldots,X_n)$ be denoted as the true set profile. For each agent $i$, let $R_i^*(X_i) \subseteq \widehat{G_i}$ be her report space when her true set of items is $X_i$. Unless otherwise stated, we assume $R_i^*(X_i)=2^{X_i}$, a case we call the understating model (where agents can hide items but cannot report fake ones); however, we will also consider the overstating model, where $R_i^*(X_i)=\{A \in \widehat{G_i}:X_i \subseteq A\}$ (agents can report fake items but cannot hide items), and the full model, where $R_i^*(X_i)=\widehat{G_i}$ (agents can perform both types of manipulations and hence are free to report any set of items they want). Also, we call agent $i$ {\it honest} if she is always truthful, that is $R_i^*(X_i)=\{X_i\}$. Thus, for example, if we say that a result applies in the understating model with agents $i,j \in N$ assumed honest, then we mean that the result applies when $R_i^*(X_i)=\{X_i\}$, $R_j^*(X_j)=\{X_j\}$, and for $k \in N \backslash \{i,j\}$, $R_k^*(X_k)=2^{X_k}$.

A deterministic mechanism is a function $f:\widehat{G_1} \times \cdots \times \widehat{G_n} \rightarrow \widehat{\cup_{i \in N}G_i}$ which maps the agents' reports to a set of items to include in the knapsack. A randomized mechanism is a function from $\widehat{G_1} \times \cdots \times \widehat{G_n}$ to all random variables with support in $\widehat{\cup_{i \in N}G_i}$. For any $A \in \widehat{\cup_{i \in N}G_i}$, we define $s(A)=\sum_{a \in A}s(a)$ and $v(A)=\sum_{a \in A}v(a)$. Next, we define:

\begin{definition}
A deterministic (respectively, randomized) mechanism $f$ is feasible iff for all ${\bf R} \in \widehat{G_1} \times \cdots \times \widehat{G_n}$:
\begin{enumerate}
\item The mechanism only uses the available items: $f({\bf R}) \in \cup_{i \in N}R_i$ (respectively, surely).
\item The mechanism doesn't violate the knapsack's capacity: $s(f({\bf R})) \leq C$ (respectively, surely).
\end{enumerate}
\end{definition}

In this paper, whenever we write `mechanism', we mean feasible mechanism. For the rest of the paper we assume without loss of generality the normalization $C=1$. We define the utility that agent $i$ derives from $S \in \widehat{\cup_{i \in N} G_i}$ when her true set is $X_i$ to be $u(X_i,S)=v(X_i \cap S)$. We also define the notation $({\bf z}_{-i},z_i')=(z_1,\ldots,z_{i-1},z_i',z_{i+1},\ldots,z_n)$ for every $i \in N$ and $n$-dimensional vector ${\bf z}$.  Next, we define strategyproofness, which means that truthful reporting is a dominant strategy for each agent (no agent can benefit from misreporting).\footnote{The revelation principle in mechanism design tells us that any result that can be achieved in dominant strategy equilibrium can also be realized via strategyproof mechanisms \cite{hartline2012approximation}, so in the search for such equilibria we can restrict ourselves to strategyproof mechanisms without loss of generality.}

\begin{definition}
A deterministic (respectively, randomized) mechanism $f$ is called {\emph strategyproof} if for all $i \in N$,  ${\bf X} \in \widehat{G_1} \times \cdots \times \widehat{G_n}$, $R_i \in R_i^*(X_i)$, we have $u(X_i,f({\bf X})) \geq u(X_i,f({\bf X}_{-i},R_i))$ (respectively, $\mathbb{E}[u(X_i,f({\bf X}))] \geq \mathbb{E}[u(X_i,f({\bf X}_{-i},R_i))]$).
\end{definition}

The social welfare obtained when the chosen solution is $S \in \widehat{\cup_{i \in N}G_i}$ and the true set profile is ${\bf X} \in \widehat{G_1} \times \cdots \times \widehat{G_n}$ is $sw(S,{\bf X})=\sum_{i=1}^n u(X_i, S)=v(S \cap (\cup_{i \in N} X_i))$. Informally speaking, the designer would like to choose $S$ that maximizes this objective function, but as we saw in Example \ref{ex:abcd}, always doing so generally violates strategyproofness. Instead, the mechanism designer's objective is to minimize the worst-case approximation ratio; for a deterministic (respectively, randomized) mechanism, it means to minimize $\max_{{\bf X} \in \widehat{G_1} \times \cdots \times \widehat{G_n}}{\frac{\sum_{i=1}^n u(X_i,OPT(\cup_{i \in N}X_i))}{\sum_{i=1}^n u(X_i, f({\bf X}))}}$ (respectively, $\max_{{\bf X} \in \widehat{G_1} \times \cdots \times \widehat{G_n}}{\frac{\sum_{i=1}^n u(X_i,OPT(\cup_{i \in N}X_i))}{\sum_{i=1}^n \mathbb{E}[u(X_i, f({\bf X}))]}}$), where $OPT(A)$ is an optimal solution to the knapsack problem when the set of available items is $A$ (in a non-strategic environment).\footnote{In case there is more than one optimal solution, choose one arbitrarily.} Thus, our goal is to find deterministic and randomized mechanisms which are strategyproof, and subject to that, have as small as possible worst-case approximation ratios. If a mechanism has worst-case approximation ratio less than or equal to $\alpha$, we say that the mechanism is $\alpha$-approximate.

Finally, one of the mechanisms we provide is not strategyproof in some of our models, but still has attractive strategic properties in terms of Bayes-Nash Equilibria (BNE). Thus, we state the required definitions here. Assume ${\bf \dot{X}}$ is a random variable over $\widehat{G_1} \times \cdots \times \widehat{G_n}$ with probability distribution $\mathcal{F}$. A strategy $\tilde{R}_i$ of agent $i$ is a function from $\widehat{G_i}$ to the space of all random variables over $\widehat{G_i}$, such that $\tilde{R}_i(\dot{X}_i) \in R_i^*(\dot{X}_i)$ surely. For convenience of notation, we make the dependence on $\dot{X}_i$ implicit and define agent $i$'s report to be $\dot{R}_i=\tilde{R}_i(\dot{X}_i)$. We say that a strategy profile ${\bf \tilde{R}}$ is a BNE under mechanism $f$ and probability distribution $\mathcal{F}$ if no agent has a beneficial unilateral deviation, meaning that there does not exist agent $i \in N$ and strategy $\tilde{R}_i'$ such that $\mathbb{E}[u(\dot{X}_i,f({\bf \dot{R}}))]<\mathbb{E}[u(\dot{X}_i,f({\bf \dot{R}}_{-i},\dot{R}_i'))]$.\footnote{Note that there is no need to separate the definitions of deterministic and randomized mechanisms in this case, as expectation is taken in both cases---in the randomized case there is just an extra random variable determined by $f$.} We say that a BNE ${\bf \tilde{R}}$ is $\alpha$-approximate if $\frac{\sum_{i=1}^n\mathbb{E}[u(\dot{X}_i,OPT(\cup_{i \in N}\dot{X}_i))]}{\sum_{i=1}^n\mathbb{E}[u(\dot{X}_i,f({\bf \dot{R}}))]} \leq \alpha$. If, for a given mechanism $f$, for all probability distributions $\mathcal{F}$ over $\widehat{G_1} \times \cdots \times \widehat{G_n}$, every BNE is $\alpha$-approximate, we say that $f$ has a {\it correlated price of anarchy} $\alpha$ (as defined in \cite{roughgarden2015price}).

\section{The HALF-GREEDY Mechanism}\label{sec:halfgreedy}

In this section, we analyze the strategic properties of a randomized mechanism we call HALF-GREEDY. In the overstating model, we show that HALF-GREEDY is strategyproof and $2$-approximate. In the understating model, we show that while HALF-GREEDY is not strategyproof, it has a correlated price of anarchy of $2$. In the full model, we can preserve a correlated price of anarchy of $2$, under a mild additional assumption. In Section 6, we obtain some lower bounds for the overstating model, which imply the optimality of HALF-GREEDY among randomized strategyproof mechanisms in that model, and also show that randomization is necessary for obtaining any constant worst-case approximation ratio.

To define HALF-GREEDY, we need two auxiliary mechanisms. The first one is the GREEDY mechanism, which adds items to the knapsack by decreasing value-to-size ratio, breaking ties according to $\succeq$. It will be convenient to define $\succeq'$ to be a total order on $\cup_{i \in N} G_i$, where for $a,b \in \cup_{i \in N} G_i$, if $\frac{v(a)}{s(a)}>\frac{v(b)}{s(b)}$, then $a \succ' b$, and if $\frac{v(a)}{s(a)}=\frac{v(b)}{s(b)}$, then $\succeq'$ agrees with $\succeq$.

\begin{definition}
For every $A \in \widehat{\cup_{i \in N} G_i}$, and every $b \in A$, let the set of items (in $A$) strictly larger than $b$ under $\succeq'$ be defined as $L_A(b)=\{a \in A:a \succ' b\}$; if $s(A) >1$, let the first item left out of the knapsack be defined as $o_A=\max_{\succeq'}\{b \in A: s(L_A(b)\cup \{b\})>1\}$. Let the reported items be ${\bf R} \in \widehat{G_1} \times \cdots \times \widehat{G_n}$. Define the GREEDY mechanism $GR$ as follows: if $s(\cup_{i \in N} R_i) \leq 1$, return $\cup_{i \in N} R_i$, otherwise return $L_{\cup_{i \in N} R_i}(o_{\cup_{i \in N} R_i})$.
\end{definition}

The second auxiliary mechanism is MAXIMUM-VALUE, which returns a single item with the maximum value possible, breaking ties according to $\succeq$:

\begin{definition}
Let the reported items be ${\bf R} \in \widehat{G_1} \times \cdots \times \widehat{G_n}$. Define the MAXIMUM-VALUE mechanism $MV$ as follows: if $\cup_{i \in N} R_i=\varnothing$ return $\varnothing$, and otherwise return $\max_{\succeq}\{a \in \cup_{i \in N} R_i: v(a) \geq v(b) \;\; \forall b \in \cup_{i \in N} R_i\}$.
\end{definition}

Now we can define HALF-GREEDY:

\begin{definition}
The HALF-GREEDY mechanism $HG$ runs GREEDY with probability $\frac{1}{2}$ and MAXIMUM-VALUE with probability $\frac{1}{2}$ (the probabilities are chosen independently of the input).
\end{definition}

GREEDY and MAXIMUM-VALUE are trivially feasible, and hence so is HALF-GREEDY. It is well known that HALF-GREEDY is $2$-approximate (this is proven via a simple linear programming based argument, see \cite{burke2005search}).

For every set $A \in \widehat{\cup_{i \in N}G_i}$, we define ${\bf X}^A$ to be the unique set profile where $A=\cup_{i \in N} X_i^A$. The following lemma shall be useful for our purposes:

\begin{lemma}\label{lem:remains}
Let $A,B,C \in \widehat{\cup_{i \in N}G_i}$, and assume $B \subseteq A$, $C \cap A \subseteq B$. Then,  $C \cap HG({\bf X}^A) \subseteq HG({\bf X}^B)$ surely (that is, every item in $C$ that is included in the knapsack when we run $MV$/$GR$ on ${\bf X}^A$ remains in the knapsack when we run them on ${\bf X}^B$).
\end{lemma}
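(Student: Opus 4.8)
The plan is to analyze the two auxiliary mechanisms $GR$ and $MV$ separately, since $HG$ is just a probabilistic mixture of the two, and the claimed inclusion is a surely-statement, i.e., it must hold on each branch. So it suffices to prove: (i) $C \cap GR({\bf X}^A) \subseteq GR({\bf X}^B)$, and (ii) $C \cap MV({\bf X}^A) \subseteq MV({\bf X}^B)$, under the hypotheses $B \subseteq A$ and $C \cap A \subseteq B$. Throughout I will write $A$ and $B$ for the ground-unions $\cup_i R_i$, since $GR$ and $MV$ depend on the reports only through this union, and since $A = \cup_i X_i^A$, $B = \cup_i X_i^B$ by definition of ${\bf X}^A, {\bf X}^B$.

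For the $MV$ part, let $c \in C \cap MV(A)$. Then $MV(A)$ is the single $\succeq$-maximal item among the value-maximizers of $A$, and $c$ equals that item; in particular $c \in A$, so by hypothesis $c \in C \cap A \subseteq B$. Since $B \subseteq A$, every item of $B$ has value at most $v(c) = \max_{a \in A} v(a)$, so $c$ is still a value-maximizer of $B$. It remains to check $c$ is still $\succeq$-maximal among $B$'s value-maximizers: any $\succeq$-larger value-maximizer $b$ of $B$ would lie in $B \subseteq A$ and have $v(b) = v(c)$, contradicting $\succeq$-maximality of $c$ in $A$. Hence $MV(B) = c$, giving $c \in MV(B)$.

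For the $GR$ part, recall $GR$ adds items in decreasing $\succeq'$-order until the capacity would be exceeded, and returns exactly $L_A(o_A) = \{a \in A : a \succ' o_A\}$ (or all of $A$ if $s(A)\le 1$). Let $c \in C \cap GR(A)$. Again $c \in A$ so $c \in B$. The key observation is that $\succeq'$ restricted to $B$ is the same order as $\succeq'$ restricted to $A$ (it is a single fixed total order on $\cup_i G_i$), and for any $b \in B$ we have $L_B(b) = L_A(b) \cap B \subseteq L_A(b)$, so $s(L_B(b) \cup \{b\}) \le s(L_A(b) \cup \{b\})$. I will argue that $c$ is packed by $GR$ on input $B$: since $c \in GR(A)$, we have $c \succ' o_A$ (in the overflow case) and more to the point $s(L_A(c) \cup \{c\}) \le 1$ — i.e., when $GR$ reaches $c$ on input $A$, the prefix of $\succeq'$-larger items together with $c$ fits. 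Then $s(L_B(c)\cup\{c\}) \le s(L_A(c)\cup\{c\}) \le 1$, so on input $B$ the prefix up to and including $c$ also fits, hence $c$ is included: $c \in GR(B)$. (In the non-overflow case $s(A)\le 1$ this is immediate since then $s(B)\le s(A)\le 1$ and $GR(B)=B \ni c$.)

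The main obstacle is making precise the claim "when $GR$ reaches $c$ on input $A$, the $\succeq'$-prefix including $c$ fits," i.e., $c \in GR(A) \Rightarrow s(L_A(c)\cup\{c\}) \le 1$; this follows from the definition of $o_A$ as the $\succeq'$-maximal item $b$ with $s(L_A(b)\cup\{b\}) > 1$, so every $c$ with $c \succ' o_A$ satisfies $s(L_A(c)\cup\{c\})\le 1$, while $c$ with $c \preceq' o_A$ is not in $GR(A) = L_A(o_A)$ at all. I would also note where the hypothesis $C\cap A \subseteq B$ is actually used: only to guarantee $c \in B$ at all; the monotonicity of $L$ under passing to a subset is what does the rest. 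One should double check the degenerate interaction between the two cases (whether $s(A) > 1$ but $s(B)\le 1$), but that only makes $GR(B)$ larger, so the inclusion is preserved. Combining (i) and (ii) over the two equiprobable branches of $HG$ yields $C \cap HG({\bf X}^A) \subseteq HG({\bf X}^B)$ surely.
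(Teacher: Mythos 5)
Your proposal is correct and follows essentially the same approach as the paper: split $HG$ into its $MV$ and $GR$ branches, note the chosen item lies in $C\cap A\subseteq B$, and show it survives the restriction to $B$. The only cosmetic difference is in the $GR$ step, where you argue via monotonicity of the prefix sums $s(L_B(c)\cup\{c\})\le s(L_A(c)\cup\{c\})$ while the paper observes directly that $o_A\succeq' o_B$ and chains $d\succ' o_A\succeq' o_B$; these are the same underlying fact.
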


\begin{proof}
Let $c \in C \cap MV({\bf X}^A)$; note that by feasibility of $MV$, $c \in A$ and hence $c \in C \cap A$, implying $c \in B$. Now, by definition of $MV$, we have that $c=\max_{\succeq}\{a \in A: v(a) \geq v(b) \; \; \forall b \in A\}$. Since $B \subseteq A$ and $c \in B$, we have that $c=\max_{\succeq}\{a \in B: v(a) \geq v(b) \; \; \forall b \in B\}$ (as $c$ is maximal in the larger set, it remains maximal in the smaller set), so $c \in MV({\bf X}^B)$.

Now, let $d \in C \cap GR({\bf X}^A)$; by feasibility of $GR$, $d \in A$, hence $d \in C \cap A$, hence $d \in B$. If $s(B) \leq 1$, then $GR({\bf X}^B)=B$, and in particular $d \in GR({\bf X}^B)$. So assume $s(B)>1$, and note that this implies $s(A)>1$. Clearly, $o_A \succeq' o_B$; since $d \in GR({\bf X}^A)=L_A(o_A)$, $d \succ' o_A \succeq' o_B$. Since $d \in B$ and $d \succ' o_B$, then $d \in L_B(o_B)$, so $d \in GR({\bf X}^B)$.
\end{proof}

From this lemma, we can deduce two corollaries. The first corollary states that if an agent hides items in HALF-GREEDY, she does not hurt the other agents; all the items of the other agents that were included in the knapsack before she hid items, remain included in the knapsack after.

\begin{corollary}\label{lem:doesnthurt}
Let $i \in N$, $Y_i, Z_i \in \widehat{G_i}$, $Y_i \subseteq Z_i$. Then for every $(R_1,\ldots,R_{i-1},R_{i+1},\ldots,R_n)$, $(X_1,\ldots,X_{i-1},X_{i+1},\ldots,X_n) \in \widehat{G_1} \times \cdots \times \widehat{G_{i-1}} \times \widehat{G_{i+1}} \times \cdots \times \widehat{G_n}$, for all $j \in N \backslash \{i\}$, $X_j \cap HG({\bf R}_{-i},Z_i) \subseteq HG({\bf R}_{-i},Y_i)$ surely (and thus $\mathbb{E}[u(X_j,HG({\bf R}_{-i},Y_i))] \geq \mathbb{E}[u(X_j,HG({\bf R}_{-i},Z_i))]$).
\end{corollary}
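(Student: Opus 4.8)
The plan is to derive this as an immediate instance of Lemma \ref{lem:remains}. Fix the data $i \in N$, $Y_i \subseteq Z_i$, and the other agents' reports ${\bf R}_{-i}$ and true sets ${\bf X}_{-i}$. The report profile $({\bf R}_{-i}, Z_i)$ corresponds (via the ${\bf X}^{(\cdot)}$ notation) to the union $A := \big(\bigcup_{j \neq i} R_j\big) \cup Z_i$, and the profile $({\bf R}_{-i}, Y_i)$ corresponds to $B := \big(\bigcup_{j \neq i} R_j\big) \cup Y_i$. Since $Y_i \subseteq Z_i$, we have $B \subseteq A$, so the first hypothesis of Lemma \ref{lem:remains} holds. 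For the third set, take $C := \bigcup_{j \in N \setminus \{i\}} X_j$ (the items owned by the \emph{other} agents). I then need to check $C \cap A \subseteq B$: any $a \in C \cap A$ lies in some $X_j$ with $j \neq i$, so $a \notin G_i$ (the ground sets are disjoint), hence $a$ does not belong to $Z_i$ or $Y_i$; since $a \in A = \big(\bigcup_{j\neq i}R_j\big)\cup Z_i$, it must be that $a \in \bigcup_{j\neq i} R_j$, and therefore $a \in B$. Thus all three hypotheses of Lemma \ref{lem:remains} are met.

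Applying Lemma \ref{lem:remains} gives $C \cap HG({\bf X}^A) \subseteq HG({\bf X}^B)$ surely, i.e. $\big(\bigcup_{j \neq i} X_j\big) \cap HG({\bf R}_{-i}, Z_i) \subseteq HG({\bf R}_{-i}, Y_i)$ surely. Intersecting both sides with a single $X_j$ ($j \neq i$) and using $X_j \subseteq \bigcup_{k \neq i} X_k$ yields $X_j \cap HG({\bf R}_{-i}, Z_i) \subseteq X_j \cap HG({\bf R}_{-i}, Y_i) \subseteq HG({\bf R}_{-i}, Y_i)$ surely, which is the set-containment claim.

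For the utility inequality, note that surely $X_j \cap HG({\bf R}_{-i}, Z_i) \subseteq X_j \cap HG({\bf R}_{-i}, Y_i)$, so since all values are positive, $v\big(X_j \cap HG({\bf R}_{-i}, Y_i)\big) \geq v\big(X_j \cap HG({\bf R}_{-i}, Z_i)\big)$ surely; that is, $u(X_j, HG({\bf R}_{-i}, Y_i)) \geq u(X_j, HG({\bf R}_{-i}, Z_i))$ pointwise on the probability space. Taking expectations preserves the inequality, giving $\mathbb{E}[u(X_j, HG({\bf R}_{-i}, Y_i))] \geq \mathbb{E}[u(X_j, HG({\bf R}_{-i}, Z_i))]$.

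The only real content here is the bookkeeping check $C \cap A \subseteq B$, and the one point that makes it go through is the disjointness assumption $G_i \cap G_j = \varnothing$ for $i \neq j$ — an item owned by another agent can never be a (real or fake) item of agent $i$, so hiding items from $Z_i$ cannot remove from $A$ anything belonging to another agent. I do not anticipate any obstacle beyond making sure the correspondence between report profiles and their unions, and the choice of $C$, are stated cleanly.
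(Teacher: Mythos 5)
Your proof is correct and follows essentially the same route as the paper: a direct application of Lemma \ref{lem:remains} with $A=Z_i\cup\bigl(\bigcup_{k\neq i}R_k\bigr)$ and $B=Y_i\cup\bigl(\bigcup_{k\neq i}R_k\bigr)$; the paper simply takes $C=X_j$ for the fixed agent $j$ rather than $C=\bigcup_{k\neq i}X_k$, and leaves the disjointness check $C\cap A\subseteq B$ implicit, which you spell out correctly.
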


\begin{proof}
Apply Lemma \ref{lem:remains} with $C=X_j$, $B=Y_i \cup (\cup_{k \in N \backslash \{i\}} R_k)$ and $A=Z_i \cup (\cup_{k \in N \backslash \{i\}} R_k)$.
\end{proof}

The second corollary states that an agent cannot benefit from reporting fake items:

\begin{corollary}\label{cor:impliesstratproof}
Let $i \in N$, ${\bf X} \in \widehat{G_1} \times \cdots \times \widehat{G_n}$ and $R_i \in \widehat{G_i}$.Then $X_i \cap HG({\bf X}_{-i},R_i) \subseteq HG({\bf X}_{-i},R_i \cap X_i)$ surely (and thus $\mathbb{E}[u(X_i,HG({\bf X}_{-i},R_i \cap X_i))] \geq \mathbb{E}[u(X_i,HG({\bf X}_{-i},R_i))]$).
\end{corollary}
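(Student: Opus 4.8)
The plan is to derive this corollary from Lemma \ref{lem:remains} in essentially the same way Corollary \ref{lem:doesnthurt} was obtained, by making a careful choice of the three sets $A$, $B$, $C$. I would take $C = X_i$, $A = R_i \cup (\cup_{k \in N \backslash \{i\}} X_k)$, and $B = (R_i \cap X_i) \cup (\cup_{k \in N \backslash \{i\}} X_k)$. The first point to nail down is that these sets name the right profiles: since the ground sets are pairwise disjoint, with $R_i, R_i \cap X_i \subseteq G_i$ and $X_k \subseteq G_k$ for $k \neq i$, we have $A \cap G_i = R_i$, $B \cap G_i = R_i \cap X_i$, and $A \cap G_k = B \cap G_k = X_k$ for $k \neq i$. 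Hence ${\bf X}^A = ({\bf X}_{-i}, R_i)$ and ${\bf X}^B = ({\bf X}_{-i}, R_i \cap X_i)$, so $HG({\bf X}^A)$ and $HG({\bf X}^B)$ are exactly the two random variables appearing in the statement.

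Next I would verify the two hypotheses of Lemma \ref{lem:remains}. The inclusion $B \subseteq A$ is immediate from $R_i \cap X_i \subseteq R_i$. For $C \cap A \subseteq B$, disjointness of the ground sets gives $X_i \cap (\cup_{k \in N \backslash \{i\}} X_k) = \varnothing$, so $C \cap A = X_i \cap R_i$, which is contained in $B$ by construction. Lemma \ref{lem:remains} then yields $X_i \cap HG({\bf X}_{-i}, R_i) \subseteq HG({\bf X}_{-i}, R_i \cap X_i)$ surely, which is the first conclusion.

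For the parenthetical utility inequality, I would intersect both sides of this containment with $X_i$ (which can only shrink the right-hand side), obtaining $X_i \cap HG({\bf X}_{-i}, R_i) \subseteq X_i \cap HG({\bf X}_{-i}, R_i \cap X_i)$ surely; since $v(\cdot)$ is monotone under set inclusion, $u(X_i, HG({\bf X}_{-i}, R_i)) \leq u(X_i, HG({\bf X}_{-i}, R_i \cap X_i))$ surely, and taking expectations gives the stated bound.

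I do not anticipate a genuine obstacle: all the mathematical content sits in Lemma \ref{lem:remains}, and the only things requiring care are (i) correctly identifying ${\bf X}^A$ and ${\bf X}^B$ with the claimed reported profiles via disjointness of the $G_k$'s, and (ii) observing that the set inclusions hold pointwise over the internal randomness of $HG$ (a mixture of the deterministic maps $GR$ and $MV$), so the word ``surely'' is legitimate and the move to expectations at the end is routine.
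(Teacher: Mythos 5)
Your proof is correct and uses exactly the paper's argument: the same application of Lemma \ref{lem:remains} with $C=X_i$, $A=R_i \cup (\cup_{k \neq i} X_k)$, and $B=(R_i \cap X_i) \cup (\cup_{k \neq i} X_k)$. The paper states this in one line; your extra verification of the hypotheses and of the identification ${\bf X}^A=({\bf X}_{-i},R_i)$, ${\bf X}^B=({\bf X}_{-i},R_i\cap X_i)$ is accurate but not a different route.
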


\begin{proof}
Apply Lemma \ref{lem:remains} with $C=X_i$, $B=(R_i \cap X_i) \cup (\cup_{j \in N \backslash \{i\}} X_j)$ and $A=R_i \cup (\cup_{j \in N \backslash\{i\}} X_j)$.\end{proof}

Next, we will use the corollaries above to analyze the strategic properties of HALF-GREEDY. We begin by noting that Corollary \ref{cor:impliesstratproof} guarantees strategyproofness in the overstating model:

\begin{corollary}
In the overstating model, $HG$ is strategyproof and $2$-approximate.
\end{corollary}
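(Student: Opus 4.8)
The plan is to derive both claims almost immediately from the results already established. The statement to prove is: in the overstating model, $HG$ is strategyproof and $2$-approximate. The $2$-approximation part is not strategic at all — it was already noted in the excerpt that $HG$ is $2$-approximate as a knapsack approximation algorithm (via the standard LP-based argument, \cite{burke2005search}), and the worst-case approximation ratio at equilibrium is bounded by the algorithmic approximation ratio evaluated at whatever profile the mechanism is actually run on, so this half requires only a one-line appeal.

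For strategyproofness, the key observation is that in the overstating model an agent $i$ with true set $X_i$ may only report sets $R_i$ with $X_i \subseteq R_i$; in particular $R_i \cap X_i = X_i$. So I would fix $i \in N$, a true profile ${\bf X}$, and an arbitrary admissible report $R_i \in R_i^*(X_i)$, which means $X_i \subseteq R_i$. Applying Corollary \ref{cor:impliesstratproof} to this $R_i$ gives $X_i \cap HG({\bf X}_{-i},R_i) \subseteq HG({\bf X}_{-i}, R_i \cap X_i) = HG({\bf X}_{-i}, X_i) = HG({\bf X})$ surely. Taking valuations and then expectations, $\mathbb{E}[u(X_i, HG({\bf X}_{-i},R_i))] = \mathbb{E}[v(X_i \cap HG({\bf X}_{-i},R_i))] \le \mathbb{E}[v(X_i \cap HG({\bf X}))] = \mathbb{E}[u(X_i, HG({\bf X}))]$, which is exactly the randomized strategyproofness inequality. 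Since $i$, ${\bf X}$, and $R_i$ were arbitrary, $HG$ is strategyproof in the overstating model.

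Honestly there is no real obstacle here: the corollary was stated precisely so that this deduction is mechanical, and the only thing to be careful about is the direction of the monotonicity (set inclusion of chosen items implies the inequality on expected utilities because $v$ is nonnegative and additive, and $v(X_i \cap S)$ is monotone in $S$) and the fact that $R_i \cap X_i$ is a legitimate argument to feed $HG$ — it is, since it lies in $\widehat{G_i}$. I would write the proof in two short sentences, one invoking Corollary \ref{cor:impliesstratproof} for strategyproofness and one invoking the known algorithmic guarantee for the approximation ratio, perhaps noting explicitly that $X_i \subseteq R_i$ is what makes $R_i \cap X_i = X_i$ in this model.
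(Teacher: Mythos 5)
Your proposal is correct and follows exactly the paper's argument: strategyproofness is the observation that $R_i \supseteq X_i$ forces $R_i \cap X_i = X_i$, so Corollary \ref{cor:impliesstratproof} applies directly, and the $2$-approximation is the known non-strategic guarantee for HALF-GREEDY. The extra detail you supply (the surely-inclusion of chosen items implying the expected-utility inequality) is just an unpacking of what the corollary's parenthetical already states.
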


\begin{proof}
Strategyproofness is immediate from Corollary \ref{cor:impliesstratproof}, since for every agent $i$ and for every possible report $R_i \supseteq X_i$ in the overstating model, $R_i \cap X_i=X_i$. The fact that the mechanism is $2$-approximate is already known, as noted above.
\end{proof}

It is important to note that once the GREEDY mechanism first fails in adding an item to the knapsack, it stops and returns the items currently in the knapsack; it does \emph{not} try to add to the knapsack the next item in the ordering that fits in the remaining space. This seemingly trivial choice is actually crucial for maintaining strategyproofness, as the example below shows. Thus, one must be careful about choices that are seemingly unimportant for optimization, as they may be important for strategic properties.

\begin{example}
Consider the mechanism BAD-GREEDY $BG$, defined as Algorithm \ref{alg:badgreedy}. Consider the case of $n=2$, with $X_1=\{a\}$, $X_2=\{b\}$, $v(a)=s(a)=1$, $v(b)=\frac{1}{4}$, $s(a)=\frac{1}{2}$; on this instance, $BG({\bf X})=\{a\}$, and the utility of agent $2$ is $v(X_2 \cap \{a\})=0$. However, if agent $2$ reports $R_2=\{b,c\}$, where $v(c)=1$, $s(c)=\frac{1}{2}$ (that is, agent $2$ reports a fake item $c$ in addition to her true item $b$), then $BG(X_1,R_2)=\{b,c\}$, and agent $2$'s utility is $v(X_2 \cap \{b,c\})=\frac{1}{4}$. Thus, BAD-GREEDY is not strategyproof in the overstating model (assuming $a \in G_1$, $b,c \in G_2$).

\begin{algorithm}[t]
\SetAlgoNoLine
\KwIn{a set $A \in \widehat{\cup_{i \in N}G_i}$}
$S \leftarrow \varnothing$, $T \leftarrow A$

\While{$T \neq \varnothing$}{
$next \leftarrow \max_{\succeq'}T$

\If{$s(S \cup \{next\}) \leq 1$}{
$S \leftarrow S \cup \{next\}$}}

\Return{$S$}
\caption{BAD-GREEDY}\label{alg:badgreedy}
\end{algorithm}
\end{example}

When agents can hide items $HG$ is no longer strategyproof, but it still has attractive strategic properties in terms of Bayes-Nash equilibria.\footnote{For example, consider the case of $n=2$, where $X_1=\{a,b\}$, $X_2=\{c,d\}$, $v(a)=2$, $v(c)=2-\epsilon$, $s(a)=s(c)=\frac{1}{4}+\epsilon$, $v(b)=3-\epsilon$, $v(d)=3$, $s(b)=s(d)=\frac{1}{2}$, where $\epsilon>0$ is very small. It is easy to check that there are no dominant strategies in this case, truthful or not.} Note that $HG$ is prior independent, meaning that the designer does not need to know the probability distribution of ${\bf \dot{X}}$ in order to run it. Let us look into its performance in the understating model:

\begin{theorem}\label{halfgreedycorrpoa2}
In the understating model, HALF-GREEDY has a correlated price of anarchy of $2$.
\end{theorem}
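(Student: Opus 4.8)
The plan is to show that at any BNE $\tilde{\mathbf R}$ under any distribution $\mathcal F$, the expected social welfare is at least half the expected optimal welfare. The natural approach is to compare the equilibrium outcome with a suitable \emph{deviation} for a carefully chosen agent, and use that at equilibrium this deviation is not profitable. First I would fix attention on the realized true profile $\dot{\mathbf X}=\mathbf X$ and the realized equilibrium reports $\dot{\mathbf R}=\mathbf R$. Since $HG$ is $2$-approximate on whatever set of items is reported, we have $sw(HG(\mathbf R),\mathbf X)\ge v(HG(\mathbf R))\ge \frac12 v(OPT(\cup_i R_i))$; the trouble is that $\cup_i R_i$ may be a strict subset of $\cup_i X_i$ (agents hid items), so $v(OPT(\cup_i R_i))$ can be much smaller than $v(OPT(\cup_i X_i))$. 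The whole point of the equilibrium hypothesis is to control this gap: if some agent $i$ is hiding items that $OPT(\cup_i X_i)$ would have used, she could instead report all her items.

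The key step is the deviation argument. Consider the deviating strategy $\tilde R_i'$ in which agent $i$ always reports her true set $\dot X_i$ (truthful reporting; this is always in $R_i^*$ in the understating model). By Corollary~\ref{lem:doesnthurt}, when agent $i$ enlarges her report from $\dot R_i$ to $\dot X_i$, no item of any other agent that was in the knapsack is lost, so the only change to the knapsack is the possible addition of items, all of which (among the true items) belong to agent $i$; hence $u(\dot X_i, HG(\dot{\mathbf R}_{-i},\dot X_i)) \ge u(\dot X_i, HG(\dot{\mathbf R}))$ always, and in fact the knapsack produced is a superset — when restricted to true items — of the equilibrium one. I would then lower-bound $\mathbb E[u(\dot X_i, HG(\dot{\mathbf R}_{-i},\dot X_i))]$ from below in terms of $v(OPT(\cup_k \dot X_k))$ restricted to agent $i$'s items: running $GR$ or $MV$ on the profile where everyone else reports $\dot R_k$ and agent $i$ reports $\dot X_i$, one of the two branches alone captures at least half the value of $OPT(\cup_k(\dot R_{-k}\text{-union-}\dot X_i))$, which contains $\cup_k \dot X_k$ only if the other agents were also truthful — so I actually need to iterate or sum over all agents rather than pick one.

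The cleanest route, I expect, is to sum the equilibrium inequalities over all agents: for each $i$, equilibrium gives $\mathbb E[u(\dot X_i, HG(\dot{\mathbf R}))]\ge \mathbb E[u(\dot X_i, HG(\dot{\mathbf R}_{-i},\dot X_i))]$. Now I bound the right side below using the $2$-approximation and the fact that the reported set $\dot X_i \cup (\cup_{k\ne i}\dot R_k)$ contains all of $OPT(\cup_k \dot X_k)\cap X_i$; combined with Corollary~\ref{lem:doesnthurt} applied to push the other agents' reports up to truth one at a time (each step only adds items to the knapsack among true items), I would argue $\sum_i \mathbb E[u(\dot X_i, HG(\dot{\mathbf R}))] \ge \frac12 \sum_i \mathbb E[u(\dot X_i, OPT(\cup_k \dot X_k))]$, i.e. a correlated PoA of $2$. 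The matching lower bound (that $2$ is tight) should follow from an instance essentially like Example~\ref{ex:abcd} or a degenerate distribution on a single profile where even the optimal non-strategic mechanism, and hence $HG$, cannot beat $2$ at equilibrium.

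\textbf{Main obstacle.} The delicate point is handling the interaction of \emph{simultaneous} hiding by several agents: a single-agent deviation to truth only recovers that agent's own contribution to $OPT$, and the other agents' hidden items are still missing from the report when we invoke the $2$-approximation bound. Making the monotonicity/superset consequences of Lemma~\ref{lem:remains} and Corollary~\ref{lem:doesnthurt} compose correctly — so that after conceptually restoring all true items the knapsack only grows (on true items) and the $2$-approximation can be applied to $OPT(\cup_k \dot X_k)$ — is where the real work lies, and one must be careful that the $MV$ branch and the $GR$ branch are each tracked separately since the $2$-approximation is a statement about their average.
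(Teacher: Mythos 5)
Your overall architecture is the right one and is essentially the paper's: compare the equilibrium reports to the truthful profile via the unilateral deviation ``report $\dot X_i$'', use the monotonicity corollary to control how the other agents' hiding affects agent $i$, sum over agents, and invoke the non-strategic $2$-approximation of $HG$. But two of your inequalities point the wrong way, and as written the chain does not close. First, your claim that $u(\dot X_i, HG(\dot{\mathbf R}_{-i},\dot X_i)) \ge u(\dot X_i, HG(\dot{\mathbf R}))$ holds ``always'' is false: it asserts that truthful reporting dominates hiding for agent $i$, i.e., that $HG$ is strategyproof in the understating model, which it is not (the paper's footnote exhibits an instance with no dominant strategies; hiding a low-ratio item can move a more valuable item of yours past the greedy cutoff). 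The inequality you are entitled to is the \emph{reverse} one, in expectation, and it comes from the BNE hypothesis, not from Corollary~\ref{lem:doesnthurt}. Second, Corollary~\ref{lem:doesnthurt} says that when an agent \emph{shrinks} her report, every other agent's chosen items survive; your invocation ``push the other agents' reports up to truth one at a time (each step only adds items to the knapsack among true items)'' uses it in the opposite direction, where it is simply not true --- enlarging agent $j$'s report can evict agent $i$'s items from the knapsack.

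The correct assembly anchors at the fully truthful profile. For each $i$, start from $HG(\dot{\mathbf X})$ and shrink the reports of the agents $j\ne i$ one at a time from $\dot X_j$ down to $\dot R_j$ (legal since $\dot R_j \subseteq \dot X_j$ surely in the understating model); $n-1$ applications of Corollary~\ref{lem:doesnthurt} give $u(\dot X_i, HG(\dot{\mathbf X})) \le u(\dot X_i, HG(\dot{\mathbf R}_{-i},\dot X_i))$, and the BNE condition gives $\mathbb E[u(\dot X_i, HG(\dot{\mathbf R}_{-i},\dot X_i))] \le \mathbb E[u(\dot X_i, HG(\dot{\mathbf R}))]$. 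Chaining and summing over $i$ shows equilibrium welfare is at least truthful-profile welfare, and the $2$-approximation is then applied only at $\dot{\mathbf X}$. There is no need to apply it at the partially truthful profile $(\dot{\mathbf R}_{-i},\dot X_i)$, which is exactly where your ``main obstacle'' about simultaneous hiding arises; that obstacle disappears once the comparison is anchored at $\dot{\mathbf X}$. (Also, under the paper's definition of correlated price of anarchy only the upper bound is required, so no tightness instance is needed.)
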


\begin{proof}
Fix agent $i \in N$. Let ${\bf \dot{X}}$ be a random variable over $\widehat{G_1} \times \cdots \times \widehat{G_n}$ with probability distribution $\mathcal{F}$. Let ${\bf \tilde{R}}$ be a BNE under mechanism $HG$ and probability distribution $\mathcal{F}$. Note that since we are in the understating model, $\dot{R}_j \subseteq \dot{X}_j$ surely for every $j \in N$  (reminder: $\dot{R}_j=\tilde{R}_j(\dot{X}_j)$). Thus, we can apply Corollary \ref{lem:doesnthurt} $n-1$ times to deduce $\mathbb{E}[u(\dot{X}_i,HG({\bf \dot{X}}))] \leq \mathbb{E}[u(\dot{X}_i,HG({\bf \dot{R}}_{-i},\dot{X}_i))]$; that is, when all agents other than $i$ hide items, agent $i$'s utility weakly increases. Now, by definition of BNE, $\mathbb{E}[u(\dot{X}_i,HG({\bf \dot{R}}_{-i},\dot{X}_i))] \leq \mathbb{E}[u(\dot{X}_i,HG({\bf \dot{R}}))]$ (when all agents other than $i$ play according to ${\bf \tilde{R}}$, $i$'s optimal response is to play $\tilde{R}_i$). Thus, we have that $\mathbb{E}[u(\dot{X}_i,HG({\bf \dot{X}}))] \leq \mathbb{E}[u(\dot{X}_i,HG({\bf \dot{R}}))]$. As $i$ was chosen arbitrarily, this holds for all agents, and so we have that $\sum_{i=1}^n\mathbb{E}[u(\dot{X}_i,HG({\bf \dot{X}}))] \leq \sum_{i=1}^n\mathbb{E}[u(\dot{X}_i,HG({\bf \dot{R}}))]$.

Now, as we noted, $HG$ is $2$-approximate, which means that $\frac{\sum_{i=1}^n u(\dot{X}_i,OPT(\cup_{j \in N} \dot{X}_j))}{\sum_{i=1}^n \mathbb{E}[u(\dot{X}_i,HG({\bf \dot{X}}))]} \leq 2$ for any fixed ${\bf \dot{X}}$, which implies $\frac{\sum_{i=1}^n \mathbb{E}[u(\dot{X}_i,OPT(\cup_{j \in N} \dot{X}_j))]}{\sum_{i=1}^n \mathbb{E}[u(\dot{X}_i,HG({\bf \dot{X}}))]} \leq 2$ (in the latter case, the expectation is taken over ${\bf \dot{X}}$ as well). Thus, we have that $\frac{\sum_{i=1}^n \mathbb{E}[u(\dot{X}_i,OPT(\cup_{j \in N} \dot{X}_j))]}{\sum_{i=1}^n \mathbb{E}[u(\dot{X}_i,HG({\bf \dot{R}}))]} \leq 2$ as well.
\end{proof}

Next, we consider the full model. In the full model we ``almost" have a correlated price of anarchy of $2$. The reason we say ``almost" is indifference: in BNE, an agent might report fake items in a way that does not change her utility, but decreases other agents' utilities. Let us give an example of such a Nash equilibrium, which is a special case of BNE when there is no private data (that is, when each $\dot{X}_i$ equals a certain given set of items surely):

\begin{example}
Consider the case of $n=2$ agents, $X_1=\{a\}$, $X_2=\{b\}$, where $v(a)=1$, $s(a)=\frac{1}{M}$, $v(b)=M-2$, $s(b)=\frac{M-1}{M}$, where $M$ is some large integer, $M>>2$. Truthful reporting is a Nash equilibrium. Note that when agents report truthfully, $HG$ chooses $a$ with probability $\frac{1}{2}$ and $b$ with probability $1$. However, if agent 1 reports $R_1=\{a,c\}$ where $v(c)=M-1$ and $s(c)=\frac{M-1}{M}$, and agent $2$ reports truthfully, we still get a Nash equilibrium, in which $HG$ still chooses $a$ with probability $\frac{1}{2}$, but $b$ is chosen with probability $0$ ($c$ is chosen with probability $1$, but since it is a fake item, it does not add to the agents' utilities or to the objective function value). In the latter Nash equilibrium, the approximation ratio is $2M-2$.
\end{example}

Corollary \ref{cor:impliesstratproof} tells us that an agent can never benefit from reporting fake items, so in a way the above can be viewed as a technicality. We shall get around that technicality by assuming that agents are not malicious.

\begin{definition}
Let $i \in N$, $X_i,R_i \in \widehat{G_i}$. $R_i$ is called a {\it malicious report} for agent $i$ with true set of items $X_i$ if there exists $R_i' \in \widehat{G_i}$ where for all $(X_1,\ldots,X_{i-1},X_{i+1},\ldots,X_n) \in \widehat{G_1}\times\cdots\times\widehat{G_{i-1}}\times\widehat{G_{i+1}}\times\cdots\times\widehat{G_n}$, and all $j \in N$, $\mathbb{E}[u(X_j,HG({\bf X}_{-i},R_i))] \leq \mathbb{E}[u(X_j,HG({\bf X}_{-i},R_i'))]$, with the inequality being strict for at least one agent in at least one instance.

A strategy $\tilde{R}_i$ is called malicious if there exists $Y_i \in \widehat{G_i}$ so that the probability that $\tilde{R}_i(Y_i)$ is malicious (for $i$ when her true set of items is $Y_i$) is nonzero.
\end{definition}

In other words, a malicious report is a report that can never benefit any agent (including the agent reporting it), and can sometimes hurt an agent. Thus, if the agents are even very mildly altruistic, they would not report maliciously.\footnote{If we want, we can instead assume that an agent's objective is lexicographic, where she first attempts to maximize the value of her own items in the knapsack, and then, subject to that, also tries to maximize social welfare; in that case, a malicious strategy is a weakly dominated strategy. Thus, not reporting maliciously is implied by not playing weakly dominated strategies, so in this model we can replace the assumption of no malicious reporting by no weakly dominated strategies being played and still get a correlated price of anarchy of $2$.} Next, we show an important property of non-malicious reports: fake items included in those reports have no impact on the true items included in the solution. The following lemma proves a special case of this claim, which limits all agents but one to report truthfully; the general claim is then proven as a corollary.

\begin{lemma}\label{lem:notmal}
Let $i \in N$, $X_i,R_i \in \widehat{G_i}$. If $R_i$ is not malicious for agent $i$ with true set of items $X_i$, then for every choice of $(X_1,\ldots,X_{i-1},X_{i+1},\ldots,X_n) \in \widehat{G_1}\times\cdots\times\widehat{G_{i-1}}\times\widehat{G_{i+1}}\times\cdots\times\widehat{G_n}$, and for every $j \in N$, we have that $X_j \cap HG({\bf X}_{-i},R_i)=X_j \cap HG({\bf X}_{-i},R_i \cap X_i)$ surely.
\end{lemma}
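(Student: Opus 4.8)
The plan is to prove the containment $X_j \cap HG(\mathbf{X}_{-i},R_i) \subseteq X_j \cap HG(\mathbf{X}_{-i},R_i \cap X_i)$ as the main point, since the reverse containment $X_j \cap HG(\mathbf{X}_{-i},R_i \cap X_i) \subseteq X_j \cap HG(\mathbf{X}_{-i},R_i)$ is \emph{not} what Lemma~\ref{lem:remains} gives directly and in fact needs the non-maliciousness hypothesis. Let me reconsider: Corollary~\ref{cor:impliesstratproof} applied with the role of ``agent $i$'' played by a generic agent actually shows $X_j \cap HG(\mathbf{X}_{-i},R_i) \subseteq HG((\mathbf{X}_{-i})_{-j}, \ldots)$ type statements, but here we want to compare $R_i$ against $R_i \cap X_i$ while \emph{all} coordinates other than $i$ are the true sets $X_j$. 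So the first step is to invoke Lemma~\ref{lem:remains} with $C = \cup_{k \ne i} X_k$ (the union of all true sets of the other agents), $A = R_i \cup (\cup_{k \ne i} X_k)$, and $B = (R_i \cap X_i) \cup (\cup_{k \ne i} X_k)$; one checks $B \subseteq A$ and $C \cap A = \cup_{k\ne i}X_k \subseteq B$, so Lemma~\ref{lem:remains} yields $(\cup_{k \ne i} X_k) \cap HG(\mathbf{X}^A) \subseteq HG(\mathbf{X}^B)$ surely, i.e., every true item of an agent $j \ne i$ that survives under report $R_i$ also survives under report $R_i \cap X_i$. This is exactly the ``$\subseteq$'' direction for $j \ne i$, and for $j = i$ it is Corollary~\ref{cor:impliesstratproof} verbatim.

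The harder direction is the reverse: I need to show that removing the fake items from $R_i$ (passing from $R_i$ to $R_i \cap X_i$) cannot \emph{add} any true item of any agent to the knapsack that was not already there under $R_i$ — and this is precisely where non-maliciousness must be used, because in general reporting fewer items can only help others (Corollary~\ref{lem:doesnthurt} with $Y_i = R_i \cap X_i \subseteq R_i = Z_i$ gives $X_j \cap HG(\mathbf{X}_{-i}, R_i) \subseteq HG(\mathbf{X}_{-i}, R_i \cap X_i)$ for $j \ne i$ as well — wait, that is again the same direction). The point is that $R_i \cap X_i$ is a weakly better report for everyone than $R_i$ (by Corollaries~\ref{lem:doesnthurt} and \ref{cor:impliesstratproof}, $\mathbb{E}[u(X_j, HG(\mathbf{X}_{-i}, R_i \cap X_i))] \ge \mathbb{E}[u(X_j, HG(\mathbf{X}_{-i}, R_i))]$ for all $j$). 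So if the two sets $X_j \cap HG(\mathbf{X}_{-i},R_i)$ and $X_j \cap HG(\mathbf{X}_{-i},R_i \cap X_i)$ were ever unequal — necessarily by the former being a strict subset of the latter in some instance, given the containment already established — then $R_i \cap X_i$ would be a report that is weakly better for all agents in all instances and strictly better for some agent in some instance, i.e., $R_i$ would satisfy the definition of a malicious report (taking $R_i' = R_i \cap X_i$), contradicting the hypothesis.

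So the proof skeleton is: (1) establish the surely-containment $X_j \cap HG(\mathbf{X}_{-i},R_i) \subseteq X_j \cap HG(\mathbf{X}_{-i},R_i \cap X_i)$ for every $j$ and every choice of the other true sets, via Lemma~\ref{lem:remains} for $j\ne i$ and Corollary~\ref{cor:impliesstratproof} for $j = i$; (2) observe that this containment, together with the definition of $u(X_j,\cdot)$ as $v(X_j \cap \cdot)$ and the fact that $v$ is strictly positive on all items, implies that for each $j$ and each instance, $\mathbb{E}[u(X_j,HG(\mathbf{X}_{-i},R_i))] \le \mathbb{E}[u(X_j,HG(\mathbf{X}_{-i},R_i\cap X_i))]$, with equality in a given instance \emph{iff} the two intersection sets coincide surely in that instance; (3) suppose for contradiction the conclusion fails, so there is some $j$ and some instance with $X_j \cap HG(\mathbf{X}_{-i},R_i) \subsetneq X_j \cap HG(\mathbf{X}_{-i},R_i\cap X_i)$ with positive probability, giving a strict expected-utility gap for that agent in that instance; (4) conclude that $R_i' := R_i \cap X_i$ witnesses that $R_i$ is malicious, contradicting the hypothesis; (5) therefore the two intersection sets are equal surely for all $j$ and all instances.

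The main obstacle I anticipate is step~(1) for $j \ne i$: being careful that the set profiles $\mathbf{X}^A$ and $\mathbf{X}^B$ of Lemma~\ref{lem:remains} match the mechanism inputs $(\mathbf{X}_{-i}, R_i)$ and $(\mathbf{X}_{-i}, R_i \cap X_i)$ — since $HG$ depends only on the union $\cup_k R_k$ of the reported sets (GREEDY and MAXIMUM-VALUE both operate on $\cup_{i\in N} R_i$), the per-agent decomposition is irrelevant and $HG(\mathbf{X}_{-i},R_i) = HG(\mathbf{X}^A)$, $HG(\mathbf{X}_{-i},R_i\cap X_i) = HG(\mathbf{X}^B)$, so this is genuinely routine; I would state it explicitly so the application of Lemma~\ref{lem:remains} is clean. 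A secondary subtlety is the ``surely'' bookkeeping in step~(3): a strict containment of random sets occurring with positive probability does translate to a strict inequality of expectations because the containment $\subseteq$ holds surely, so $v(X_j \cap HG(\mathbf{X}_{-i},R_i\cap X_i)) - v(X_j \cap HG(\mathbf{X}_{-i},R_i)) \ge 0$ surely and is $> 0$ on a positive-probability event, hence has strictly positive expectation.
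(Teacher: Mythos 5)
Your proposal is correct and follows essentially the same route as the paper's proof: establish the containment $X_j \cap HG(\mathbf{X}_{-i},R_i) \subseteq X_j \cap HG(\mathbf{X}_{-i},R_i\cap X_i)$ from Corollaries~\ref{lem:doesnthurt} and \ref{cor:impliesstratproof}, then argue that any strict containment would make $R_i'=R_i\cap X_i$ a witness that $R_i$ is malicious, contradicting the hypothesis. The only cosmetic difference is that the paper argues separately for the deterministic sub-mechanisms $MV$ and $GR$, whereas you handle the ``surely''/positive-probability bookkeeping for $HG$ directly; both are valid.
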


\begin{proof}
Assume that there exists  $(X_1,\ldots,X_{i-1},X_{i+1},\ldots,X_n) \in \widehat{G_1}\times\cdots\times\widehat{G_{i-1}}\times\widehat{G_{i+1}}\times\cdots\times\widehat{G_n}$ such that $X_j \cap MV({\bf X}_{-i},R_i) \neq X_j \cap MV({\bf X}_{-i},R_i \cap X_i)$ for some $j \in N$ (the proof for $GR$ instead of $MV$ is identical). 

First, note that for every instance where the true set of items for agent $i$ is $X_i$, and for any fixed reports for the agents in $N \backslash \{i\}$, reporting $R_i \cap X_i$ instead of $R_i$ weakly increases all agents' utilities (agent $i$ by Corollary \ref{cor:impliesstratproof}, and the rest by Corollary \ref{lem:doesnthurt}). Thus, it is enough to show that on our instance, when all $k \in N \backslash \{i\}$ own and report $X_k$, reporting $R_i \cap X_i$ instead of $R_i$ strictly increases some agent's utility (since then $R_i$ would be malicious). Corollaries \ref{cor:impliesstratproof} and \ref{lem:doesnthurt} imply that $u(X_k,MV({\bf X}_{-i},R_i \cap X_i)) \geq u(X_k,MV({\bf X}_{-i},R_i))$ and $u(X_k,GR({\bf X}_{-i},R_i \cap X_i)) \geq u(X_k,GR({\bf X}_{-i},R_i))$ for all $k \in N$. By Corollary \ref{lem:doesnthurt} if $j \neq i$ and Corollary \ref{cor:impliesstratproof} if $j=i$, $X_j \cap MV({\bf X}_{-i},R_i) \neq X_j \cap MV({\bf X}_{-i},R_i \cap X_i)$ implies $X_j \cap MV({\bf X}_{-i},R_i) \subset X_j \cap MV({\bf X}_{-i},R_i \cap X_i)$, thus $u(X_j,MV({\bf X}_{-i},R_i \cap X_i))>u(X_j,MV({\bf X}_{-i},R_i))$. Thus our proof is complete.
\end{proof}

\begin{corollary}\label{cor:nonmal}
Let $i \in N$, $X_i,R_i \in \widehat{G_i}$. If $R_i$ is not malicious for agent $i$ with true set of items $X_i$, then for every choice of $(X_1,\ldots,X_{i-1},X_{i+1},\ldots,X_n),(R_1,\ldots,R_{i-1},R_{i+1},\ldots,R_n) \in \widehat{G_1}\times\cdots\times\widehat{G_{i-1}}\times\widehat{G_{i+1}}\times\cdots\times\widehat{G_n}$, and for every $j \in N$, we have that $X_j \cap HG({\bf R})=X_j \cap HG({\bf R}_{-i},R_i \cap X_i)$ surely.
\end{corollary}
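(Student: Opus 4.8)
The plan is to reduce Corollary~\ref{cor:nonmal} to Lemma~\ref{lem:notmal} by an induction/substitution argument that replaces each of the other agents' reports $R_k$ (for $k \in N\setminus\{i\}$) with an appropriately chosen true set, so that Lemma~\ref{lem:notmal} applies directly. The key observation is that Lemma~\ref{lem:notmal} holds for \emph{every} choice of true sets for the agents in $N\setminus\{i\}$. So the first step is to fix an arbitrary tuple $(R_k)_{k\neq i}$ and define, for each $k\neq i$, a ``pretend true set'' $X_k' := R_k$. Since the ground sets are disjoint and $R_k \in \widehat{G_k}$, this is a legitimate true-set profile for the other agents. Now apply Lemma~\ref{lem:notmal} with this profile $(X_k')_{k\neq i}$ in place of $(X_k)_{k\neq i}$: non-maliciousness of $R_i$ for agent $i$ with true set $X_i$ is a property of $R_i$ and $X_i$ alone (it quantifies over all choices of the other agents' true sets inside its own definition), so the hypothesis of Lemma~\ref{lem:notmal} is still met, and we conclude
\[
X_j' \cap HG\big((X'_{-i}),R_i\big) = X_j' \cap HG\big((X'_{-i}),R_i\cap X_i\big)\quad\text{surely, for every }j\in N.
\]
Rewriting with $X_k' = R_k$, this says $R_j \cap HG({\bf R}) = R_j \cap HG({\bf R}_{-i},R_i\cap X_i)$ for $j\neq i$, and $X_i \cap HG({\bf R}) = X_i \cap HG({\bf R}_{-i},R_i\cap X_i)$ for $j=i$.

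The remaining work is to pass from ``$R_j \cap (\cdot)$'' back to ``$X_j \cap (\cdot)$'' for the genuine true sets $X_j$, $j\neq i$. Here the point is that feasibility of $HG$ (it only outputs items that were reported) means $HG({\bf R}) \subseteq \cup_{k}R_k$, and similarly $HG({\bf R}_{-i},R_i\cap X_i)\subseteq (R_i\cap X_i)\cup(\cup_{k\neq i}R_k)$; in particular every output item belonging to agent $j\neq i$ lies in $R_j$. Hence for $j\neq i$, $X_j\cap HG({\bf R}) = X_j\cap R_j\cap HG({\bf R}) = (X_j\cap R_j)\cap HG({\bf R})$, and by the displayed identity restricted further to $X_j$ (intersect both sides with $X_j$, using $X_j\subseteq G_j$ so that intersecting kills all items of agents $\neq j$), we get $X_j\cap HG({\bf R}) = X_j\cap HG({\bf R}_{-i},R_i\cap X_i)$. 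For $j=i$ the identity $X_i\cap HG({\bf R}) = X_i\cap HG({\bf R}_{-i},R_i\cap X_i)$ is already exactly the $j=i$ case above, so no extra argument is needed. Taking these together over all $j\in N$ completes the proof.

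I expect the main (and essentially only) obstacle to be getting the bookkeeping exactly right when re-indexing: one must be careful that ``not malicious for agent $i$ with true set $X_i$'' is genuinely independent of what the other agents report or own — which it is, by the definition of malicious report, since that definition already universally quantifies over $(X_k)_{k\neq i}$ — and that replacing the other agents' \emph{reports} by fictitious \emph{true sets} does not disturb the hypothesis. Beyond this, everything is a routine consequence of disjointness of ground sets and feasibility of $HG$, both of which are available. There is no combinatorial content in $HG$ itself to analyze here; the whole structure of the greedy/max-value choice was already handled inside Lemma~\ref{lem:notmal}.
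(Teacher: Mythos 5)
Your proof is correct and follows essentially the same route as the paper's: apply Lemma~\ref{lem:notmal} with the other agents' \emph{reports} $R_k$ playing the role of their true sets (licit because non-maliciousness of $R_i$ is a property of $i$, $X_i$, $R_i$ alone), then intersect with $X_j$ and use feasibility of $HG$ plus disjointness of the ground sets to restrict back to the real true sets. Your treatment is if anything a touch more careful than the paper's in separating the $j=i$ case, where the lemma already yields the conclusion directly with no feasibility step needed.
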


\begin{proof}
Let  $(R_1,\ldots,R_{i-1},R_{i+1},\ldots,R_n),(X_1,\ldots,X_{i-1},X_{i+1},\ldots,X_n) \in \widehat{G_1}\times\cdots\times\widehat{G_{i-1}}\times\widehat{G_{i+1}}\times\cdots\times\widehat{G_n}$. If $R_i$ is non-malicious, then by Lemma \ref{lem:notmal} we know that for all $j \in N$, $R_j \cap HG({\bf R})=R_j \cap HG({\bf R}_{-i},R_i \cap X_i)$ surely. Intersecting both sides with $X_j$ yields $X_j \cap R_j \cap HG({\bf R})=X_j \cap R_j \cap HG({\bf R}_{-i},R_i \cap X_i)$ surely. By feasibility of $HG$, we have $X_j \cap HG({\bf R}) \subseteq R_j$ and $X_j \cap HG({\bf R}_{-i},R_i \cap X_i) \subseteq R_j$; thus it follows that $X_j \cap HG({\bf R})=X_j \cap HG({\bf R}_{-i},R_i \cap X_i)$
\end{proof}

We can now describe the sense in which the correlated price of anarchy of $2$ is preserved:

\begin{theorem}
In the full model, any BNE under $HG$ in which no malicious strategies are played is $2$-approximate.
\end{theorem}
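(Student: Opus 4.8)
The plan is to reduce the full-model claim to the understating-model result (Theorem \ref{halfgreedycorrpoa2}) by using Corollary \ref{cor:nonmal} to ``strip off'' the fake items from every agent's report without changing the social welfare realized by $HG$. Concretely, fix a probability distribution $\mathcal{F}$ over $\widehat{G_1}\times\cdots\times\widehat{G_n}$ and a BNE ${\bf \tilde{R}}$ under $HG$ in which no malicious strategy is played. For each agent $i$, since $\tilde{R}_i$ is non-malicious, for every realization $\dot{X}_i$ the report $\dot{R}_i$ is (almost surely) non-malicious for $i$ with true set $\dot{X}_i$, so Corollary \ref{cor:nonmal} applies with that agent playing the role of $i$.

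First I would define the ``cleaned'' strategy profile ${\bf \tilde{R}'}$ by $\tilde{R}'_i(\dot{X}_i)=\tilde{R}_i(\dot{X}_i)\cap \dot{X}_i$, i.e. each agent reports only the true items she would have reported. Applying Corollary \ref{cor:nonmal} once per agent (peeling off fake items one agent at a time, exactly as Corollary \ref{lem:doesnthurt} is iterated in the proof of Theorem \ref{halfgreedycorrpoa2}) gives that for every $j\in N$, $\dot{X}_j\cap HG({\bf\dot R})=\dot{X}_j\cap HG({\bf\dot R'})$ surely; summing over $j$ yields $sw(HG({\bf\dot R}),{\bf\dot X})=sw(HG({\bf\dot R'}),{\bf\dot X})$ surely, hence in expectation. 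So the welfare of the original (possibly-overstating) equilibrium equals that of the cleaned profile, which lies in the understating model's strategy space.

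Next I would argue that ${\bf\tilde R'}$ is itself a BNE of $HG$ under $\mathcal{F}$ when we restrict attention to the understating model. This needs a short argument: any unilateral deviation $\tilde R''_i$ available to $i$ in the understating model is also available in the full model, and by Corollary \ref{cor:nonmal} (applied to the report $\dot R_i$, which dominates $\dot R'_i = \dot R_i\cap\dot X_i$ via Corollary \ref{cor:impliesstratproof} for agent $i$ herself) agent $i$'s expected utility under ${\bf\tilde R'}$ equals her expected utility under ${\bf\tilde R}$; since ${\bf\tilde R}$ is a full-model BNE, no such deviation is beneficial, so ${\bf\tilde R'}$ is an understating-model BNE. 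Then Theorem \ref{halfgreedycorrpoa2} gives $\sum_i\mathbb{E}[u(\dot X_i,OPT(\cup_j\dot X_j))]\le 2\sum_i\mathbb{E}[u(\dot X_i,HG({\bf\dot R'}))]=2\sum_i\mathbb{E}[u(\dot X_i,HG({\bf\dot R}))]$, which is exactly $2$-approximateness of the original BNE.

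The main obstacle I expect is the measurability/``almost surely'' bookkeeping in the first two steps: Corollary \ref{cor:nonmal} is stated for a fixed non-malicious report $R_i$, whereas here $\dot R_i$ is random, so I must use the definition of a non-malicious \emph{strategy} (the event that $\tilde R_i(\dot X_i)$ is malicious has probability zero) to conclude the set-equality holds surely, and then be careful that peeling fake items off one agent changes the inputs seen by the next agent — but this is handled exactly as the iterated application of Corollary \ref{lem:doesnthurt} in Theorem \ref{halfgreedycorrpoa2}, since after cleaning agent $k$, agent $k$'s report is still a (now fake-free) report and Corollary \ref{cor:nonmal} only requires the \emph{cleaned} agent's report to have been non-malicious. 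A secondary subtlety is verifying that the ``cleaned'' deviation argument doesn't circularly assume what we want: it only uses that utilities are unchanged by cleaning (Corollaries \ref{cor:impliesstratproof} and \ref{cor:nonmal}), not the equilibrium property of ${\bf\tilde R'}$.
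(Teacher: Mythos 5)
Your proposal is correct and follows essentially the same route as the paper: clean each report via Corollary \ref{cor:nonmal} to preserve all utilities, observe the cleaned profile is a BNE of the understating model, and invoke Theorem \ref{halfgreedycorrpoa2}. The only step worth making fully explicit is the bridge $\mathbb{E}[u(\dot X_i,HG({\bf \dot R}_{-i},Y_i))]=\mathbb{E}[u(\dot X_i,HG({\bf \dot R}'_{-i},Y_i))]$ for an \emph{arbitrary} deviation $Y_i$ (obtained by applying Corollary \ref{cor:nonmal} to the other $n-1$ agents' non-malicious reports), which is exactly how the paper justifies that no deviation against the cleaned opponents can beat one against the original opponents.
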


\begin{proof}
Let ${\bf \dot{X}}$ be a random variable over $\widehat{G_1} \times \cdots \times \widehat{G_n}$ with probability distribution $\mathcal{F}$. Let ${\bf \tilde{R}}$ be a BNE under mechanism $HG$ and probability distribution $\mathcal{F}$, and assume that no malicious strategies are being played in ${\bf \tilde{R}}$. For convenience, define ${\bf \dot{R} \cap \dot{X}}=(\dot{R}_1 \cap \dot{X}_1,\ldots,\dot{R}_n \cap \dot{X}_n)$, and let ${\bf \tilde{R} \cap \dot{X}}$ the strategy profile generating those reports. Applying Corollary \ref{cor:nonmal} $n$ times in succession, once for each agent, to get from ${\bf \dot{R}}$ to ${\bf \dot{R} \cap \dot{X}}$, we can immediately conclude that every agent gets the exact same utility under ${\bf \dot{R}}$ and ${\bf \dot{R} \cap \dot{X}}$. If ${\bf \tilde{R} \cap \dot{X}}$ is a BNE in the full model, then since no fake items are reported, it is clearly also a BNE in the understating model, and hence by Theorem \ref{halfgreedycorrpoa2} it is $2$-approximate, and therefore ${\bf \tilde{R}}$ is $2$-approximate as well. Thus, it is enough to show that ${\bf \tilde{R} \cap \dot{X}}$ is a BNE in the full model.

Fix agent $i$. Apply Corollary \ref{cor:nonmal} $n-1$ times to get that for all $Y_i \in \widehat{G_i}$, $\mathbb{E}[u(\dot{X}_i,HG({\bf \dot{R}_{-i}},Y_i))]=\mathbb{E}[u(\dot{X}_i,HG({\bf \dot{R}\cap\dot{X}}_{-i},Y_i))]$; thus, it follows that since $\dot{R}_i$ is a best response by agent $i$ when all other agents $j$ play $\dot{R}_j$, it remains a best response when they all play $\dot{R}_j \cap \dot{X}_j$ instead. Also, by Corollary \ref{cor:impliesstratproof}, agent $i$ weakly benefits from playing $\dot{R}_i \cap \dot{X}_i$ instead of $\dot{R}_i$, so $\dot{R}_i \cap \dot{X}_i$ is also a best response to all other agents $j$ playing $\dot{R}_j \cap \dot{X}_j$. Thus, ${\bf \tilde{R}\cap\dot{X}}$ is a BNE in the full model.
\end{proof}

As we noted in the proof above, a BNE in the full model in which no fake items are reported is trivially a BNE in the understating model. However, it is not trivial that every BNE in the understating model remains a BNE in the full model. Below we show that it is in fact the case, so no BNE are lost when moving from the understating model to the full model.

\begin{theorem}\label{thm:bnefullmodel}
Let ${\bf \tilde{R}}$ be a BNE in under $HG$ and distribution $\mathcal{F}$ in the understating model; then it remains such a BNE in the full model.
\end{theorem}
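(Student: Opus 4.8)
The plan is to argue by contradiction: suppose ${\bf \tilde{R}}$ is a BNE under $HG$ and $\mathcal{F}$ in the understating model, but not in the full model. Then some agent $i$ has a profitable deviation $\tilde{R}_i'$ in the full model, where now $\tilde{R}_i'(\dot{X}_i)$ may contain fake items. The goal is to convert this into a profitable deviation within the understating model, contradicting that ${\bf \tilde{R}}$ is a BNE there. The natural candidate is the ``trimmed'' deviation $\tilde{R}_i' \cap \dot{X}_i$, i.e. the strategy that reports $\dot{R}_i' \cap \dot{X}_i$.

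The key steps, in order, are as follows. First, note that all the other agents $j \neq i$ play according to ${\bf \tilde{R}}$, which lies in the understating model, so $\dot{R}_j \subseteq \dot{X}_j$ surely; hence the profile $({\bf \dot{R}}_{-i}, \dot{R}_i' \cap \dot{X}_i)$ is a legal report profile in the understating model, and $\tilde{R}_i' \cap \dot{X}_i$ is a legal understating-model strategy for $i$. Second, apply Corollary \ref{cor:impliesstratproof} to agent $i$ (conditioning on $\dot{X}_i$ and on the realized reports of the others, then taking expectations) to get $\mathbb{E}[u(\dot{X}_i, HG({\bf \dot{R}}_{-i}, \dot{R}_i' \cap \dot{X}_i))] \geq \mathbb{E}[u(\dot{X}_i, HG({\bf \dot{R}}_{-i}, \dot{R}_i'))]$; that is, trimming the fake items out of $i$'s deviation only helps $i$. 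Chaining this with the assumed strict inequality $\mathbb{E}[u(\dot{X}_i, HG({\bf \dot{R}}))] < \mathbb{E}[u(\dot{X}_i, HG({\bf \dot{R}}_{-i}, \dot{R}_i'))]$ gives $\mathbb{E}[u(\dot{X}_i, HG({\bf \dot{R}}))] < \mathbb{E}[u(\dot{X}_i, HG({\bf \dot{R}}_{-i}, \dot{R}_i' \cap \dot{X}_i))]$. Third, observe that this last inequality exhibits $\tilde{R}_i' \cap \dot{X}_i$ as a profitable unilateral deviation for $i$ against ${\bf \tilde{R}}_{-i}$ \emph{entirely within the understating model}, contradicting the hypothesis that ${\bf \tilde{R}}$ is a BNE there. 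Hence no such deviation exists, and ${\bf \tilde{R}}$ is a BNE in the full model.

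I do not expect a serious obstacle here; the argument is essentially a one-line application of Corollary \ref{cor:impliesstratproof}. The only point requiring mild care is the measure-theoretic bookkeeping: Corollary \ref{cor:impliesstratproof} is a pointwise (``surely'') statement about fixed reports, so to move to expectations one conditions on the realization of $(\dot{X}_i, {\bf \dot{R}}_{-i})$, applies the corollary pointwise to conclude $X_i \cap HG(\cdot) \subseteq X_i \cap HG(\cdot)$ and hence the utility inequality pointwise, and then integrates; the independent internal randomness of $HG$ is handled exactly as in the proofs of the earlier theorems. One should also note explicitly that $\tilde{R}_i' \cap \dot{X}_i$ genuinely satisfies $\tilde{R}_i' \cap \dot{X}_i (Y_i) \in R_i^*(Y_i) = 2^{Y_i}$ surely, so it is a bona fide strategy in the understating model, which is what makes the contradiction legitimate.
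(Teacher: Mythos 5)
Your proposal is correct and follows essentially the same route as the paper: assume a profitable full-model deviation $\tilde{R}_i'$ exists, use Corollary \ref{cor:impliesstratproof} to show the trimmed deviation $\tilde{R}_i' \cap \dot{X}_i$ is at least as profitable, and note it lives in the understating model, contradicting the BNE hypothesis there. The paper's proof is just a terser version of yours (phrased contrapositively rather than as a contradiction, which is the same thing); your extra remarks on conditioning and on legality of the trimmed strategy are fine but not needed.
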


\begin{proof}
Assume ${\bf \tilde{R}}$ is not a BNE in the full model. Then there must exist an agent $i$ and strategy $\tilde{R}_i'$ where $\mathbb{E}[u(\dot{X}_i,HG({\bf \dot{R}}))]<\mathbb{E}[u(\dot{X}_i,HG({\bf \dot{R}}_{-i},\dot{R}_i'))]$. But, by Corollary \ref{cor:impliesstratproof}, this implies that $\mathbb{E}[u(\dot{X}_i,HG({\bf \dot{R}}))]<\mathbb{E}[u(\dot{X}_i,HG({\bf \dot{R}}_{-i},\dot{R}_i'\cap \dot{X}_i))]$, and since $\dot{R}_i'\cap \dot{X}_i \subseteq \dot{X}_i$ surely, it follows that ${\bf \tilde{R}}$ is not a BNE in the understating model.
\end{proof}

To conclude our discussion on BNE in this section, we note that if we assume that $G_1,\ldots,G_n$ are finite, the existence of a ($2$-approximate) BNE in the understating model (and hence, by Theorem \ref{thm:bnefullmodel}, also in the full model) is guaranteed by Nash's Theorem, and all of our results so far still hold.

\section{The EQUAL-UTILITY Mechanism}\label{sec:equtility}

In this section, we consider the special case of $n=2$ in the understating model. We design a specialized randomized mechanism for this environment, which is strategyproof and $\frac{5+4\sqrt{2}}{7} \approx 1.522$-approximate. We note that in Section 6, we provide a lower bound of $\frac{5\sqrt{5}-9}{2} \approx 1.09$ for randomized strategyproof mechanisms in the understating model, even when $n=2$; thus, if we want strategyproofness, some approximation gap is necessary. Also in Section 6, we show that no deterministic strategyproof mechanism can beat our mechanism's approximation ratio.

Informally, the idea behind EQUAL-UTILITY, shown as Algorithm \ref{alg:equtil}, is to consider $OPT(X_1)$ and $OPT(X_2)$, namely the optimal solutions using just a single agent's items. Let's say we wish for EQUAL-UTILITY to be $\alpha$-approximate. If $OPT(X_i)$ is significantly bigger than $OPT(X_j)$, to the extent where $OPT(X_i)$ is guaranteed to be an $\alpha$-approximation on its own to the optimal value, then we simply return $OPT(X_i)$. Otherwise, we have some indication that the two agents' ``importance" in creating approximately optimal solutions is relatively close. In that case, we consider the following mathematical program (PROGRAM), where $A$ is a random decision variable (set of items): maximize $v(A)$, subject to (1) $A \subseteq X_1 \cup X_2$ and $s(A) \leq 1$ surely and (2) $\mathbb{E}[v(A \cap X_1)]=\mathbb{E}[v(A \cap X_2)]$.\footnote{PROGRAM can be stated as a linear programming problem with exponentially many variables. Let $T=\{S \subseteq X_1 \cup X_2:s(S) \leq 1\}$. Then PROGRAM can be stated as: maximize $\sum_{S \in T} v(S)p_S$ subject to $\sum_{S \in T} v(S \cap X_1)p_S=\sum_{S \in T} v(S \cap X_2)p_S$, $\sum_{S \in T}p_S=1$ and $p_S \geq 0$ for all $S \in T$ (where the $p_S$'s are our decision variables).}

PROGRAM solves the knapsack problem with the additional constraint that the agents' utilities are equal. Since in general, apart from $\varnothing$, there might not be a deterministic solution that satisfies this additional constraint, we allow for randomized solutions to the knapsack problem instead. The important property of PROGRAM is that it aligns the strategic goals of the two agents: each agent cares about the other's success exactly as much as she does about her own, for their utilities are equal. The fact that the agents' ``importance" is close makes the equal expected utility constraint not too restricting, and causes this program to provide a good approximation to the optimal value.

\begin{algorithm}[t]
\SetAlgoNoLine
\KwIn{Sets of items $X_1,X_2$, where $X_i \in \widehat{G_i}$; parameter $\alpha \in [1,2)$}
$Z_1 \leftarrow OPT(X_1)$, $Z_2 \leftarrow OPT(X_2)$

\eIf{$v(Z_i) \geq \frac{1}{\alpha}(v(Z_1)+v(Z_2))$ for some $i \in \{1,2\}$}{\Return $Z_i$ (option 1)}{\Return optimal solution to PROGRAM with input ${\bf X}$ (option 2)}
\caption{EQUAL-UTILITY}\label{alg:equtil}
\end{algorithm}

We have used options 1 and 2 to denote the lines in which the algorithm can terminate.  If not both $v(Z_1)=v(Z_2)=0$, note that since $\alpha<2$, there is no ambiguity in the conditional statement: if there is a choice of $i$ that satisfies the condition, it must be unique. If both $v(Z_1)=v(Z_2)=0$, then $Z_1=Z_2=\varnothing$, so which one is chosen does not matter.

\begin{theorem}\label{thm:equtilstratproof}
For $\alpha \geq \frac{5+4\sqrt{2}}{7} \approx 1.522$, EQUAL-UTILITY is strategyproof and $\alpha$-approximate.
\end{theorem}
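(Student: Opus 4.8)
The plan is to prove strategyproofness and the approximation ratio separately, since they are quite different in character.

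For strategyproofness, the key observation is that an agent can only deviate by hiding items (we are in the understating model with $n=2$). Fix agent $1$ with true set $X_1$, and consider her reporting $R_1 \subseteq X_1$. The proof splits into cases according to which option (1 or 2) the mechanism takes on the true profile ${\bf X}$ versus on the manipulated profile $(R_1, X_2)$. The main facts I would use are: (i) hiding items can only decrease $v(OPT(X_1))$, hence weakly relaxes the ``option 1 for agent 2'' trigger and tightens the ``option 1 for agent 1'' trigger; (ii) under option 2, agent $1$'s expected utility equals $\tfrac12 v(A)$ where $A$ is the optimal PROGRAM solution (by the equal-expected-utility constraint), so it is simply half the PROGRAM optimum; and (iii) the PROGRAM optimum on a subinstance is at most the PROGRAM optimum on the full instance, and is also bounded below by a deterministic feasible point such as the empty set or $\{a\}$ for a single good item. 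The delicate subcase is when truth triggers option 2 but the deviation triggers option 1 returning $Z_2 = OPT(X_2)$: here I must show $\tfrac12 v(\text{PROGRAM}({\bf X})) \ge v(X_1 \cap OPT(X_2))$, and the converse subcase where truth gives option 1 returning $OPT(X_1)$ (so agent $1$ already gets $v(OPT(X_1))$, her best possible outcome, making deviation pointless). The honest verification that no profitable deviation exists across all these cases is where the choice $\alpha \ge \tfrac{5+4\sqrt2}{7}$ will be forced; I expect the binding case to be one where agent $1$'s PROGRAM utility under truth must dominate what she could grab by suppressing items to make her own optimum ``small enough'' to reach option 2 while keeping a valuable item in play.

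For the approximation ratio, I would bound $v(OPT(X_1 \cup X_2))$ against the mechanism's expected value $\mathrm{ALG}$ in the two options. If option 1 fires, then by the triggering condition $v(Z_i) \ge \tfrac1\alpha(v(Z_1)+v(Z_2))$, and since $v(OPT(X_1 \cup X_2)) \le v(Z_1)+v(Z_2)$ (each agent's optimum individually, combined, upper-bounds the joint optimum up to the standard knapsack reasoning — more precisely one uses that an optimal joint solution restricted to each agent's items is feasible for that agent, so $v(OPT(X_1\cup X_2)) = v(OPT(X_1\cup X_2)\cap X_1) + v(OPT(X_1\cup X_2)\cap X_2) \le v(Z_1)+v(Z_2)$), we immediately get ratio $\le \alpha$. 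If option 2 fires, I need $v(\text{PROGRAM}({\bf X})) \ge \tfrac1\alpha v(OPT(X_1\cup X_2))$. This is the crux: I would exhibit an explicit randomized feasible point for PROGRAM built from $OPT(X_1 \cup X_2)$, $Z_1$, and $Z_2$ — roughly, randomize between the joint optimum (which may be unbalanced in utilities) and a ``corrector'' that overweights the agent who is shortchanged, using the fact that option 2 fired to guarantee $v(Z_1), v(Z_2) < \tfrac1\alpha(v(Z_1)+v(Z_2))$, i.e. both $Z_i$ are comparably large and in particular $v(Z_j) > (1 - \tfrac1\alpha)(v(Z_1)+v(Z_2)) \ge (1-\tfrac1\alpha) v(OPT(X_1\cup X_2))$. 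Mixing the joint optimum with small probability against $Z_1$ and $Z_2$ with appropriate probabilities equalizes expected utilities while losing only a controlled fraction of value; optimizing the mixture weights yields the constraint $\alpha \ge \tfrac{5+4\sqrt2}{7}$, matching the bound.

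The main obstacle, and where I would spend the most care, is constructing the right feasible randomized solution to PROGRAM in the option-2 case and tracking exactly how much value the balancing costs: the equal-expected-utility constraint couples the two agents, so naively mixing in $Z_j$ to help the lagging agent also dilutes total value, and getting the mixture probabilities right (as a function of $\alpha$ and of how unbalanced $OPT(X_1\cup X_2)$ is) is what produces the precise constant $\tfrac{5+4\sqrt2}{7}$. A secondary subtlety is handling the degenerate cases ($Z_1 = Z_2 = \varnothing$, or one agent owning nothing useful) and confirming the case analysis for strategyproofness is exhaustive, but those are routine once the main estimates are in place.
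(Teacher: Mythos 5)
Your plan matches the paper's proof in all essentials: strategyproofness is argued by cases on which option fires, using that hiding items only lowers $v(OPT(X_i'))$ (so an agent cannot force an earlier exit in her favor) and that an agent's option-2 utility is exactly half the PROGRAM optimum, which is monotone in the feasible region; the approximation bound at option 2 is obtained, exactly as you propose, by exhibiting a two-point mixture of $O=OPT(X_1\cup X_2)$ with the individual optimum $Z_a$ of the agent shortchanged in $O$, with the mixing probability chosen to equalize expected utilities and the option-2 trigger supplying the bound $v(O_b)/v(Z_a)<\tfrac{1}{\alpha-1}$ on the imbalance. Two small corrections to your roadmap: the subcase you call delicate is actually trivial, since if a deviation triggers option 1 returning $Z_2$ then agent 1's utility is $v(X_1\cap Z_2)=0$ (the ground sets are disjoint), so there is nothing to compare against; and the constant $\tfrac{5+4\sqrt2}{7}$ is \emph{not} forced anywhere in the strategyproofness argument, which goes through for every $\alpha\in[1,2)$ --- it arises solely from the option-2 approximation calculation, namely maximizing $\tfrac12+\tfrac{x}{2}+\tfrac{1}{8x}$ over $x\in[\tfrac12,\tfrac{1}{\alpha-1}]$ and requiring the maximum to be at most $\alpha$.
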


\begin{proof}
First, we prove strategyproofness. We break the proof into cases:
\begin{enumerate}
\item Assume the mechanism ends at option 1. In that case, agent $i$ clearly gets the best utility she can possibly get, and hence has no incentive to misreport. Consider agent $j \neq i$. Assume agent $j$ reports $X_j' \subset X_j$. Let us denote $Z_j'=OPT(X_j')$. Note that $v(Z_j') \leq v(Z_j)$. Since $v(Z_i) \geq \frac{1}{\alpha}(v(Z_i)+v(Z_j))$, trivially  $v(Z_i) \geq \frac{1}{\alpha}(v(Z_i)+v(Z_j'))$. Thus, agent $j$ cannot prevent the mechanism from ending at option 1 by misreporting. Furthermore, since at option 1 the mechanism returns $Z_i$, agent $j$ has no influence on what is returned. Thus agent $j$ cannot change the outcome of the mechanism by misreporting.
\item Assume the mechanism ends at option 2. Consider agent 1 (the proof for agent 2 is identical). Assume agent $1$ reports $X_1' \subset X_1$. Let us denote $Z_1'=OPT(X_1')$. Note that $v(Z_1') \leq v(Z_1)$. Since the mechanism did not end at option 1, we have that $v(Z_1)<\frac{1}{\alpha}(v(Z_1)+v(Z_2))$, that is $(\alpha-1)v(Z_1)<v(Z_2)$, so $(\alpha-1)v(Z_1')<v(Z_2)$ and hence $v(Z_1')<\frac{1}{\alpha}(v(Z_1')+v(Z_2))$. Thus, agent 1 cannot make the mechanism stop at option 1 and return $Z_1'$, and therefore agent 1 cannot benefit from making the mechanism stop at option 1 (since if $Z_2$ is returned, her payoff is $0$). At option 2, agent 1 would like to report all of her items: her utility is exactly half of the optimal solution to PROGRAM, and so enlarging the feasible region of PROGRAM weakly increases her own utility.
\end{enumerate}

Next, we prove the mechanism is $\alpha$-approximate. As $v(Z_1)+V(Z_2) \geq v(OPT(\cup_{i \in \{1,2\}} X_i))$, if the mechanism ends at option 1 clearly it provides an $\alpha$-approximation.\footnote{It is worth noting that it is important to use $v(Z_1)+v(Z_2)$ instead of $v(OPT(X_1 \cup X_2))$ for the conditional statement; using the latter would violate strategyproofness.} So we just need to prove this for the case the mechanism ends at option 2. Let $O=OPT(X_1 \cup X_2)$, $O_1=O \cap X_1$, $O_2=O \cap X_2$. Let $a=\argmin_{i \in \{1,2\}} v(O_i)$, $b=\argmax_{i \in \{1,2\}} v(O_i)$ (if $v(O_1)=v(O_2)$, set $a=1$ and $b=2$), and let $p=\frac{v(O_b)-v(O_a)}{v(O_b)-v(O_a)+v(Z_a)}$. Consider the random variable $A$ which returns $Z_a$ with probability $p$ and $O$ with probability $(1-p)$. $p$ was chosen precisely so that $A$ becomes a feasible solution to our program: $\mathbb{E}[v(A \cap X_b)]=(1-p)v(O_b)=\frac{v(O_b)v(Z_a)}{v(O_b)-v(O_a)+v(Z_a)}=\frac{(v(O_b)-v(O_a))v(Z_a)}{v(O_b)-v(O_a)+v(Z_a)}+\frac{v(O_a)v(Z_a)}{v(O_b)-v(O_a)+v(Z_a)}=pv(Z_a)+(1-p)v(O_a)=\mathbb{E}[v(A \cap X_a)]$. It is therefore enough to show that $\frac{v(O)}{\mathbb{E}[v(A)]} \leq \alpha$. 

Note that $\frac{v(O)}{\mathbb{E}[v(A)]}=\frac{v(O_a)+v(O_b)}{(1-p)(v(O_a)+v(O_b))+pv(Z_a)}=\frac{(v(O_a)+v(O_b))(v(O_b)-v(O_a)+v(Z_a))}{2v(Z_a)v(O_b)}$. When $v(Z_a)$ and $v(O_b)$ are fixed values, and $v(O_a)$ is a variable, this is a parabola with a maximum at $v(O_a)=\frac{v(Z_a)}{2}$. Plugging that in, we have the following upper bound on the approximation ratio: $\frac{(\frac{v(Z_a)}{2}+v(O_b))^2}{2v(Z_a)v(O_b)}=\frac{1}{2}+\frac{v(O_b)}{2v(Z_a)}+\frac{v(Z_a)}{8v(O_b)}$. Let us denote $x=\frac{v(O_b)}{v(Z_a)}$; then our upper bound is $\frac{1}{2}+\frac{x}{2}+\frac{1}{8x}$. We know that $x=\frac{v(O_b)}{v(Z_a)} \leq \frac{v(Z_b)}{v(Z_a)} < \frac{1}{\alpha-1}$ (since the mechanism did not end at option 1). We also know that $x=\frac{v(O_b)}{v(Z_a)}= \frac{2v(O_b)}{2v(Z_a)} \geq  \frac{v(O_b)+v(O_a)}{2v(Z_a)}=\frac{v(O)}{2v(Z_a)} \geq \frac{1}{2}$. So, to see how bad our upper bound can be, we maximize $\frac{1}{2}+\frac{x}{2}+\frac{1}{8x}$ over $x \in [\frac{1}{2},\frac{1}{\alpha-1}]$. Simple analysis shows that for $\alpha>1$, the maximum is  $\frac{1}{2}+\frac{1}{2(\alpha-1)}+\frac{\alpha-1}{8}$. We are therefore guaranteed approximation ratio $\alpha$ from our mechanism as long as $\frac{1}{2}+\frac{1}{2(\alpha-1)}+\frac{\alpha-1}{8} \leq \alpha$, which is easily seen to hold as long as $\alpha \geq \frac{5+4\sqrt{2}}{7}$.
\end{proof}

In the appendix, we show that $\frac{5+4\sqrt{2}}{7}$ was not an overestimate---this is tight for EQUAL-UTILITY. Finally, we note that EQUAL-UTILITY requires solving NP-hard problems: computing $OPT(X_1)$ and $OPT(X_2)$ means solving the knapsack problem, which is known to be NP-hard. In the appendix, we show that solving PROGRAM is NP-hard as well. Therefore, if running time is a consideration, EQUAL-UTILITY might be insufficient.\footnote{We do have approximation algorithms for these NP-hard problems: there is a known FPTAS for knapsack \cite{vazirani2013approximation}, and from the proof of Theorem \ref{thm:equtilstratproof} we can easily deduce an algorithm which produces $\frac{1}{1-\epsilon}\alpha$ approximation to PROGRAM  in time polynomial in $\frac{1}{\epsilon}$ whenever EQUAL-UTILITY needs to solve it. The problem is that if we use them, strategyproofness is technically violated in a strange fashion: for example, consider trying to compute $OPT(X_i)$. If $X_i$ contains few items, the designer solves the problem optimally, but if $X_i$ contains many items, then-- due to running time considerations-- she solves it approximately. Thus, if $X_i$ is large, agent $i$ might be better off only reporting the items included in the optimal solution, to avoid risking a suboptimal solution being chosen by the approximation algorithm. Therefore, despite the fact that the goals of the agent and the designer are fully aligned with each other in this case, the agent can potentially benefit from hiding all items except those in the optimal solution. Nevertheless, if she does so, it is in an attempt to \emph{assist}, rather than mislead, the designer by narrowing down the search space. Informally speaking, these strange violations could be eliminated if we allow the designer to ``listen" to the agent's solution proposals for such problems, and choose the best between her own computed solution and the agent's.}

\section{The PACIFY-THE-LIAR Mechanism}\label{sec:pacifytheliar}

We continue exploring the understating model. We now allow for a general number of agents $n$, however we restrict ourselves to an environment where there is only one bad apple---specifically, $n-1$ agents are assumed to be honest; we assume without loss of generality that agent $1$ is the manipulative agent.\footnote{Certainly, if the identity of the manipulative agent is public, assuming that it is agent $1$ is without loss of generality. Our results hold even if the honesty of an agent---whether or not that agent is manipulative---is private information of that agent, if we naturally extend our definition of mechanism to allow reporting of all private data, including honesty. Our lower bound will of course still hold. As for our upper bound, consider an augmented mechanism: if exactly one agent reports that they are manipulative, run PACIFY-THE-LIAR accordingly; otherwise, return $\varnothing$. As all the other agents answer that they are honest, it is easily seen that it is always weakly better for the manipulative agent to reveal herself as dishonest.} For this environment, we will provide a $\phi$-approximate deterministic strategyproof mechanism ($\phi=\frac{1+\sqrt{5}}{2} \approx 1.618$ is the golden ratio). In Section 6 we provide a matching lower bound.

Our deterministic mechanism, called PACIFY-THE-LIAR (Algorithm 3), begins similarly to EQUAL-UTILITY in the sense that it attempts to see if agent $1$ can guarantee an $\alpha$-approximation on her own, or if agents $2$ through $n$ can guarantee an $\alpha$-approximation together, without agent $1$. In the former case, we return $OPT(X_1)$, and in the latter case we return $OPT(\cup_{i \in N \backslash \{1\}} X_i)$. If neither of the cases hold, we look at a collection of solutions that guarantee $\alpha$-approximation, and attempt to ``pacify" agent $1$ by choosing her favorite solution within that collection.

\begin{algorithm}[t]
\SetAlgoNoLine
\KwIn{Sets of items $X_1,\ldots,X_n$, where $X_i \in \widehat{G_i}$; parameter $\alpha \geq 1$}
$Z_1 \leftarrow OPT(X_1)$, $Z_2 \leftarrow OPT(\cup_{i \in N \backslash \{1\}} X_i)$

\uIf{$v(Z_1) \geq \frac{1}{\alpha}(v(Z_1)+v(Z_2))$}{
\Return $Z_1$ (option 1)}
\uElseIf{$v(Z_2) \geq \frac{1}{\alpha}v(OPT(\cup_{i \in N}X_i))$}{\Return $Z_2$ (option 2)}
\Else{$S \leftarrow \{A \subseteq \cup_{i \in N} X_i:v(A) > \alpha v(Z_2)\}$

\Return $\argmax_{A \in S} v(A \cap X_1)$ (option 3)}
\caption{PACIFY-THE-LIAR}
\end{algorithm}

Note that if we reach option 3, $S$ is nonempty since we did not stop at option 2 (thus $S$ includes the optimal solution).
\begin{theorem}
PACIFY-THE-LIAR is strategyproof and $\alpha$-approximate for $\alpha \geq \phi$.
\end{theorem}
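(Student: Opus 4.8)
The plan is to prove strategyproofness and the $\alpha$-approximation guarantee separately, with strategyproofness being the more delicate part.

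\textbf{Approximation guarantee.} Since only agent $1$ can misreport, the welfare at the reported profile equals the welfare at the true profile (agents $2,\dots,n$ report truthfully, and the definition of welfare only counts items belonging to true sets). So it suffices to check that the returned set $S$ satisfies $v(S \cap (\cup_{i\in N} X_i)) \ge \frac{1}{\alpha} v(OPT(\cup_{i\in N}X_i))$ on the true profile. If we stop at option 1, $v(Z_1) \ge \frac1\alpha(v(Z_1)+v(Z_2)) \ge \frac1\alpha v(OPT(\cup_{i\in N}X_i))$ since $v(Z_1)+v(Z_2)$ dominates the optimal value (split $OPT$ into agent $1$'s part and the rest). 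Option 2 gives the bound by its own defining condition. At option 3, we return a set in $S$, and every set in $S$ has value exceeding $\alpha v(Z_2) \ge \alpha \cdot \frac1\alpha v(OPT(\cup_{i\in N}X_i)) = v(OPT)$ — wait, that would be a contradiction, so in fact the point is that the optimal solution itself lies in $S$ (since we did \emph{not} stop at option 2, $v(OPT) > \alpha v(Z_2)$), hence $\argmax_{A\in S} v(A\cap X_1)$ has total value $> \alpha v(Z_2) \ge v(OPT(\cup_{i\in N}X_i))/1$; more carefully, any $A\in S$ has $v(A) > \alpha v(Z_2) \ge v(OPT(\cup_{i\in N}X_i))$ — this last inequality needs $\alpha v(Z_2) \ge v(OPT)$, which is exactly the negation of the option-2 condition reversed; so actually I need the bound $v(A) \ge \frac1\alpha v(OPT)$, and since $A \in S$ means $v(A) > \alpha v(Z_2)$ and we want $\ge \frac1\alpha v(OPT)$, it suffices that $\alpha v(Z_2) \ge \frac1\alpha v(OPT)$ is \emph{not} needed — rather, the optimum is in $S$, so the returned set has value $\ge v(OPT) \ge \frac1\alpha v(OPT)$. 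This is where $\alpha \ge \phi$ is not yet used; it will be forced by strategyproofness.

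\textbf{Strategyproofness.} Agent $1$ is the only possible manipulator; fix the (honest) reports $X_2,\dots,X_n$ and let agent $1$ report $X_1' \subseteq X_1$ with $Z_1' = OPT(X_1')$, noting $v(Z_1') \le v(Z_1)$ and $OPT(X_1' \cup \bigcup_{i\ge2}X_i)$ has value at most $v(OPT(\cup_{i\in N}X_i))$. I would case on which option the mechanism reaches on the \emph{truthful} report. If it reaches option 1, agent $1$ gets $v(Z_1) = v(OPT(X_1))$, her maximum conceivable utility, so she cannot do better. If it reaches option 2 or 3 truthfully: I must show that by shrinking her report agent $1$ cannot reach a state that gives her more than her truthful utility. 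The key monotonicity facts are: (i) shrinking $X_1$ to $X_1'$ can only make the option-1 test easier to pass (since $v(Z_1')\le v(Z_1)$, the condition $v(Z_1')\ge\frac1\alpha(v(Z_1')+v(Z_2))$ is \emph{harder}, not easier — because the left side shrinks; so she cannot newly trigger option 1) — this needs the same algebra as in the EQUAL-UTILITY proof: $(\alpha-1)v(Z_1) < v(Z_2)$ implies $(\alpha-1)v(Z_1') < v(Z_2)$; (ii) shrinking can only decrease $v(OPT(\cup X_i))$, so the option-2 test $v(Z_2) \ge \frac1\alpha v(OPT(\cup X_i))$ can only become easier — agent $1$ \emph{can} push the mechanism from option 3 into option 2, but at option 2 she gets utility $v(Z_2 \cap X_1) = 0$ (since $Z_2 \subseteq \bigcup_{i\ge 2} X_i$ and the $G_i$ are disjoint), so that is never beneficial; (iii) if she stays at option 3 with a smaller report, the feasible collection $S' = \{A \subseteq X_1' \cup \bigcup_{i\ge2}X_i : v(A) > \alpha v(Z_2)\}$ is a \emph{subset} of the original $S$ restricted to sets whose agent-$1$ part lies in $X_1'$; since her utility is $\max_{A\in S'} v(A \cap X_1') \le \max_{A \in S} v(A \cap X_1)$ (every $A\in S'$ is also in $S$ and $v(A\cap X_1') \le v(A\cap X_1)$), she weakly loses.

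\textbf{Where $\alpha \ge \phi$ enters and the main obstacle.} The one remaining case is when the truthful run reaches option 3 but agent $1$'s utility there, $\max_{A\in S} v(A\cap X_1)$, could conceivably be less than what she'd get by \emph{truthful reporting under a different branch} — but since truthful reporting \emph{is} the run we're analyzing, the real subtlety is only the reverse: could the mechanism, on the truthful report, be at option 2 (giving her $0$) while she could misreport to stay legitimately at option 3 (giving her positive utility)? I claim this cannot happen when $\alpha \ge \phi$, and this is the crux. At truthful option 2 we have $v(Z_1) < \frac{1}{\alpha-1}v(Z_2) \cdot (\alpha-1)$... more precisely $(\alpha-1)v(Z_1) < v(Z_2)$ and $v(Z_2)\ge\frac1\alpha v(OPT(\cup X_i)) \ge \frac1\alpha(v(Z_1)+v(Z_2))$; from these one derives $v(Z_1) \le (\alpha-1)v(Z_2)$ and hence any honestly-reportable set $A$ (with agent-$1$ part inside $X_1$, of value $\le v(Z_1)$) has $v(A) \le v(Z_1) + v(Z_2) \le \alpha v(Z_2)$, so $A \notin S'$ for \emph{any} sub-report — meaning option 3 is unreachable for agent $1$ once the truthful run is at option 2. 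Verifying this chain of inequalities and pinning down that $\alpha \ge \phi$ (equivalently $\alpha^2 \ge \alpha + 1$, i.e. $\frac1\alpha + \frac1{\alpha^2} \le 1$, or the form $\alpha - 1 \ge \frac1\alpha$) is exactly the threshold that makes it go through — is the main obstacle, and I expect it to parallel the parabola-optimization bookkeeping in the EQUAL-UTILITY proof. Finally, I'd note agent $1$ cannot \emph{enlarge} her report (understating model), so only sub-reports need be considered, and combining all cases yields $u(X_1, f(\mathbf{X})) \ge u(X_1, f(\mathbf{X}_{-1}, X_1'))$ for every $X_1' \subseteq X_1$.
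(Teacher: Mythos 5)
Your strategyproofness argument is essentially the paper's: option 1 cannot be newly triggered by shrinking the report (the same algebra as in EQUAL-UTILITY), landing in option 2 gives agent $1$ utility $0$, and within option 3 the set $S'$ shrinks to a subcollection of $S$ with $v(A\cap X_1')=v(A\cap X_1)$, so $\max_{A\in S'}v(A\cap X_1)\le\max_{A\in S}v(A\cap X_1)$. One remark: your ``crux'' about escaping from a truthful option-2 run is both overcomplicated and partly wrong --- you use $v(OPT(\cup_i X_i))\ge v(Z_1)+v(Z_2)$, which is backwards (splitting $OPT$ by ownership shows $v(OPT)\le v(Z_1)+v(Z_2)$). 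The correct and much simpler observation is that $Z_2$ does not depend on agent $1$'s report while $v(OPT(X_1'\cup\bigcup_{i\ge2}X_i))\le v(OPT(\bigcup_i X_i))$, so the option-2 test remains satisfied under any sub-report and option 3 is unreachable; no constraint on $\alpha$ is involved.

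The genuine gap is in the approximation bound at option 3, and it stems from misreading what the mechanism returns. The output is $\argmax_{A\in S}v(A\cap X_1)$, which maximizes \emph{agent $1$'s utility} over $S$, not the total value; the fact that $OPT(\bigcup_i X_i)\in S$ therefore does \emph{not} imply the returned set has value $\ge v(OPT)$. The only guarantee membership in $S$ gives is $v(A)>\alpha v(Z_2)$, and since we did not stop at option 2 we actually have $\alpha v(Z_2)<v(OPT)$, so this alone proves nothing. The missing step --- which is exactly where $\alpha\ge\phi$ enters, contrary to your claim that it is forced by strategyproofness --- is to combine the failure of option 1, $(\alpha-1)v(Z_1)<v(Z_2)$, hence $v(Z_1)+v(Z_2)<\frac{\alpha}{\alpha-1}v(Z_2)$, with $v(OPT)\le v(Z_1)+v(Z_2)$ to get $v(A)>\alpha v(Z_2)>(\alpha-1)\bigl(v(Z_1)+v(Z_2)\bigr)\ge(\alpha-1)v(OPT)$; this yields an $\alpha$-approximation precisely when $\alpha-1\ge\frac{1}{\alpha}$, i.e.\ $\alpha\ge\phi$. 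As written, your proof does not establish the approximation ratio at option 3.
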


\begin{proof}
We begin by proving strategyproofness. If the algorithm ends at options 1 or 2, the proof is similar to case (1) in the proof of Theorem \ref{thm:equtilstratproof}. If the algorithm ends at option 3, the fact that agent $1$ cannot benefit from making the mechanism stop at an earlier option follows from a similar argument to the one in case (2) in the proof of Theorem \ref{thm:equtilstratproof}. Thus, all we need to show is that if the mechanism stops at option 3 under agent $1$'s misreport, agent $1$ does not benefit. Assume $X_1' \subset X_1$ and let $S'=\{A \subseteq X_1' \cup (\cup_{i \in N \backslash \{1\}} X_i):v(A) > \alpha v(Z_2)\}$. Then, note that $S' \subseteq S$, and for every $A \in S'$, $v(A \cap X_1')=v(A \cap X_1)$; therefore,  $\max_{A \in S'}v(A \cap X_1) \leq \max_{A \in S}v(A \cap X_1)$, and so agent $1$ does not benefit.

Now that we have established strategyproofness, let us analyze the approximation ratio. Clearly $\alpha$-approximation is guaranteed when the mechanism ends at options 1 or 2. So let us consider the case where the mechanism ends at option 3.  Let $A$ be the output. Since $A \in S$, $v(A)>\alpha v(Z_2)$. Since we did not stop at option 1, $v(Z_1) < \frac{1}{\alpha-1}v(Z_2)$, and hence $v(Z_1)+v(Z_2) < \frac{\alpha}{\alpha-1}v(Z_2)$. Therefore, $v(A) > \alpha v(Z_2) > (\alpha-1)(v(Z_1)+v(Z_2)) \geq (\alpha-1)v(OPT(\cup_{i \in N}X_i))$. Thus, we are guaranteed an $\alpha$-approximate mechanism if $\frac{1}{\alpha} \leq (\alpha-1)$, and this is easily seen to hold true for $\alpha \geq \phi$.
\end{proof}

Finally, we note that PACIFY-THE-LIAR, just like EQUAL-UTILITY, does not run in polynomial time.

\section{Lower Bounds}\label{sec:lowerbounds}

In this section we provide some lower bounds on the approximation ratios that strategyproof mechanisms can provide. Our lower bounds are robust---they all hold even when there are only two agents and only one is manipulative. In some cases, our lower bounds match what our mechanisms from the previous sections give us, proving their optimality. Throughout this section we assume that the ground sets are unrestricted.

We begin with the overstating model. 

\begin{theorem}
In the overstating model, there is no randomized strategyproof mechanism with a worst-case approximation ratio strictly smaller than $2$.
\end{theorem}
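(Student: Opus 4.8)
The plan is to exhibit a small family of instances in the overstating model — with only two agents, only one of whom is strategic — on which every strategyproof randomized mechanism must do poorly. The classic obstruction to knapsack approximation below $2$ is the tension between a single large item and a pair of medium items; here we force that tension across two reports that are related by overstating, so strategyproofness ties the mechanism's hands.

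Concretely, I would set $C=1$ and use three items: $a_1$ of value and size $1$, owned by agent $1$; and $b_1, b_2$, each of value $\frac12 + \epsilon$ and size $\frac12$, owned by agent $2$ for small $\epsilon > 0$. In the true instance $\mathbf{X}^{(1)}$, agent $1$ owns $\{a_1\}$ and agent $2$ owns $\{b_1, b_2\}$; the optimum is $\{b_1, b_2\}$ with value $1 + 2\epsilon$, and a feasible solution can contain either $a_1$ alone (value $1$) or $\{b_1,b_2\}$ (value $1+2\epsilon$), but not $a_1$ together with either $b_j$. In the second instance $\mathbf{X}^{(2)}$, agent $2$'s true set is the singleton $\{b_1\}$ — which, since we are in the overstating model, she can inflate to the report $\{b_1, b_2\}$, exactly the report matching $\mathbf{X}^{(1)}$. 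On $\mathbf{X}^{(2)}$ the optimum is $\{a_1\}$ (value $1$, versus $b_1$'s value $\frac12+\epsilon$), so any $\alpha$-approximate mechanism must put probability at least $\frac{1}{\alpha}\cdot\frac{1}{1}$-ish mass on outcomes of value close to $1$; more usefully, its expected value on $\mathbf{X}^{(2)}$ must be at least $\frac{1}{\alpha}$.

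The key step is the strategyproofness coupling. Let $f$ be strategyproof and $\alpha$-approximate. On the report profile $(\{a_1\},\{b_1,b_2\})$ — which is the truthful profile for $\mathbf{X}^{(1)}$ — let $q$ be the probability that $f$'s output contains $b_2$, and consider agent $2$'s expected utility from her items $b_1, b_2$. Running $f$ truthfully on $\mathbf{X}^{(2)}$ gives agent $2$ (true set $\{b_1\}$) some expected value $w_2$ from $b_1$; strategyproofness for agent $2$ in $\mathbf{X}^{(2)}$ says her gain from overstating to $\{b_1,b_2\}$ is not positive, i.e. her expected value of $b_1$ under the profile $(\{a_1\},\{b_1,b_2\})$ is $\le w_2$. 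Meanwhile $\alpha$-approximation on $\mathbf{X}^{(2)}$ forces the expected output value there to be $\ge \frac1\alpha$, but the only way to get value exceeding $\frac12+\epsilon$ on $\mathbf{X}^{(2)}$'s reports is to select $a_1$, so $f$ on $\mathbf{X}^{(2)}$ truthful must select $a_1$ with large probability, which caps $w_2$ (the expected value $b_1$ can receive) at something small. Combining: on $\mathbf{X}^{(1)}$'s truthful profile, $f$ gives $b_1$ small expected value, and by symmetry (swap $\mathbf{X}^{(2)}$ for the analogous instance where agent $2$'s true set is $\{b_2\}$) $f$ gives $b_2$ small expected value too — so the expected value of the output on $\mathbf{X}^{(1)}$ is bounded well below $1+2\epsilon$, forcing $\alpha \ge 2 - O(\epsilon)$. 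Taking $\epsilon \to 0$ yields the bound $2$.

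The main obstacle I anticipate is making the symmetry/averaging argument airtight: the mechanism need not be symmetric in $b_1$ and $b_2$, so I cannot assume it treats them identically. The clean fix is to use two separate auxiliary instances, $\mathbf{X}^{(2)}_{b_1}$ with agent $2$'s true set $\{b_1\}$ and $\mathbf{X}^{(2)}_{b_2}$ with true set $\{b_2\}$, apply the strategyproofness-plus-approximation argument to each, and add the two resulting inequalities; the sum controls the total expected value agent $2$ receives on $\mathbf{X}^{(1)}$ regardless of how $f$ splits it between $b_1$ and $b_2$. One should also double-check the feasibility bookkeeping — that on the report $(\{a_1\},\{b_1,b_2\})$ no feasible set contains $a_1$ and a $b_j$ simultaneously (total size $1 + \frac12 > 1$), so every positive-probability outcome is either "$\{b_1,b_2\}$-type" or "$a_1$-only-type" — which is what links the output value directly to agent $2$'s utility. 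I'd budget the most care for writing these inequalities with the $\epsilon$'s tracked explicitly so the limit is unambiguous.
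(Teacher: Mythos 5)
There is a genuine gap, and it is in the final step rather than in the feasibility bookkeeping you flagged. Your strategyproofness-plus-approximation coupling does correctly show that on the report profile $(\{a_1\},\{b_1,b_2\})$ each of $b_1,b_2$ can be selected only with probability about $2(1-\frac{1}{\alpha})$, so agent $2$'s expected utility on $\mathbf{X}^{(1)}$ is bounded. But this does not bound the mechanism's \emph{welfare} on $\mathbf{X}^{(1)}$ well below the optimum $1+2\epsilon$: nothing prevents the mechanism from selecting $a_1$ with probability close to $1$ on that profile, which already yields expected value close to $1$ and hence approximation ratio close to $1+2\epsilon$, not $2$. Indeed HALF-GREEDY, which is strategyproof and $2$-approximate in the overstating model, satisfies every inequality you derive and achieves ratio roughly $1$ on $\mathbf{X}^{(1)}$ (it returns $\{b_1,b_2\}$ or $\{a_1\}$ each with probability $\frac12$), so no contradiction with $\alpha<2$ can come out of this family of instances. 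The structural defect is that in every instance of your gadget, the item whose selection is \emph{not} constrained by strategyproofness ($a_1$) is by itself worth essentially that instance's optimum, so the mechanism always has a harmless fallback.

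The paper's construction inverts the roles to remove that fallback. There, the overstating agent owns $M^2$ tiny items of value $\frac{1}{M}$ each in the large instance, so her items \emph{are} the optimum (value $M$) and the approximation guarantee forces some single item $a_j$ to be chosen with probability about $\frac{1}{r}$. In the second instance she truly owns only that $a_j$ (value $\frac{1}{M}$, but still size-incompatible with the other agent's item $b$ of value $1$), and strategyproofness --- she could overstate back to the full set --- forces $a_j$ to retain probability about $\frac{1}{r}$ there, wasting that probability mass on value $\frac{1}{M}$ instead of $1$. The welfare is then at most about $1-\frac{1}{r}$ against an optimum of $1$, giving ratio $\frac{r}{r-1}>r$ for $r<2$. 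The feature your gadget needs but lacks is an instance in which the strategyproofness-forced selection is nearly worthless relative to that instance's optimum while still blocking it.
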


\begin{proof}
Let $f$ be a randomized strategyproof mechanism with worst-case approximation ratio $1<r<2$ (the case of $r=1$ has been covered in Example \ref{ex:abcd}). Consider ${\bf X}' \in \widehat{G_1} \times \cdots \times \widehat{G_n}$. In this instance,  $X_1'=\{a_1,\ldots,a_{M^2}\}$, where $v(a_j)=\frac{1}{M}$ and $s(a_j)=\frac{1}{M^2}$ for all $j=1,\ldots,M^2$; $X_2'=\{b\}$ where $v(b)=s(b)=1$; $X_i'=\varnothing$ for all $i > 2$, where $M$ is some very large integer. The optimal solution for this instance is $X_1'$, with optimal value $M$. If no item in $X_1'$ is chosen with probability strictly more than $p$, the approximation ratio is at least $\frac{M}{pM+(1-p)}$ (note that choosing any items from $X_1'$ excludes choosing $b$ and vice versa), and thus we must have $\frac{M}{pM+(1-p)} \leq r$, namely $p \geq \frac{M-r}{Mr-r}$. Thus it must be the case that some item $a_j$ is chosen with probability at least $q=\frac{M-r}{Mr-r}$.

Next, consider ${\bf X}$, which is identical to ${\bf X'}$ except $X_1=\{a_j\}$. In this instance, the optimal solution is $X_2$, with optimal value $1$. Due to strategyproofness, it must be the case that item $a_j$ is chosen with probability at least $q$ (since otherwise agent $1$ has an incentive to report $X_1'$ instead of $X_1$). Thus, the approximation ratio on this instance is at least $\frac{1}{\frac{q}{M}+(1-q)}=\frac{1}{\frac{M-r}{M^2r-Mr}+\frac{Mr-M}{Mr-r}}$. Sending $M \rightarrow \infty$, this becomes $\frac{r}{r-1}>r$ for $r \in (1,2)$. Contradiction.
\end{proof}

Thus, the theorem above shows that HALF-GREEDY is best possible in the overstating model. Next, we show that randomization is necessary for good approximation:

\begin{theorem}
In the overstating model, there is no deterministic strategyproof mechanism with a constant worst-case approximation ratio.
\end{theorem}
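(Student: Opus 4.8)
The plan is to exhibit, for any purported constant approximation ratio $\alpha$, a pair of instances that a deterministic strategyproof mechanism cannot handle simultaneously, mirroring the structure of Example~\ref{ex:abcd} but scaled so that the gap blows up. The key feature to exploit is that a deterministic mechanism must commit to a single solution on each report profile, while strategyproofness in the overstating model forces a rigidity: if agent~$1$'s true set is small, she can report a large superset, and whatever the mechanism does on that large report must be at least as good for her as what it does on her true (small) report.

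First I would set up an instance with two agents where agent~$2$ owns a single item $b$ of value and size $1$, and agent~$1$ owns $k$ ``small'' items $a_1,\dots,a_k$, each of size $\frac{1}{k}$ and value $\frac{1}{k}+\epsilon$ for a tiny $\epsilon>0$, so that the optimal solution is to take all of $X_1$, with value $1+k\epsilon$, and taking any small item precludes taking $b$. A deterministic $\alpha$-approximate mechanism on this profile must return a set of small items of total value at least $\frac{1+k\epsilon}{\alpha}$, hence containing at least $\lceil \frac{1}{\alpha}\rceil$ of the $a_j$'s (for $k$ large); in particular some specific small item $a_j$ is selected. Next I would consider the profile where agent~$1$'s \emph{true} set is just $\{a_j\}$ (agent~$2$ unchanged). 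Here the optimum is $\{b\}$ with value $1$. But in the overstating model agent~$1$ with true set $\{a_j\}$ can report the full set $\{a_1,\dots,a_k\}$, which is a legal overstatement; by strategyproofness her utility under truthful reporting must be at least her utility under that overreport, which includes $a_j$ and so is at least $\frac{1}{k}+\epsilon$. Hence the mechanism on the truthful profile $\{a_j\},\{b\}$ must give agent~$1$ value $\geq \frac{1}{k}+\epsilon$, which (since $a_j$ and $b$ conflict, and $a_j$ is the only available small item) forces the outcome to include $a_j$ and exclude $b$; the realized social welfare is then at most $\frac{1}{k}+\epsilon$ while the optimum is $1$, giving an approximation ratio of roughly $k$, which exceeds $\alpha$ once $k$ is large enough. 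This contradiction completes the argument.

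The one subtlety to handle carefully is the quantifier on ``some specific small item'': the first instance only tells us that \emph{some} $a_j$ is chosen, and the choice of $j$ depends on the mechanism, but that is fine---we fix the mechanism first, read off the $j$ it uses, and then build the second instance around that particular $j$; since the ground sets are unrestricted, items of the required sizes and values are available to realize both instances with $a_j\in G_1$ and $b\in G_2$. I would also make sure $\alpha$ is treated as an arbitrary constant fixed in advance, choose $k > 2\alpha$ (say), and note $\epsilon$ can be sent to $0$ so the bounds are clean; one may alternatively just take $\epsilon=0$ if the model's strict positivity of values is respected by using value exactly $\frac{1}{k}$ and noting the optimum is then $1$ as well, in which case I would instead perturb agent~$2$'s item slightly (value $1-\delta$) to break the tie in favor of taking all small items at the optimum.

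The main obstacle, such as it is, is purely bookkeeping: ensuring that the strategyproofness constraint is applied in the correct direction (true set small, report large) and that the forced inclusion of $a_j$ in the second instance genuinely rules out any high-welfare alternative. Since agent~$1$'s true set in the second instance is the singleton $\{a_j\}$, the only items in play are $a_j$ and $b$, they conflict on size, and selecting $a_j$ yields welfare $\frac{1}{k}+\epsilon$ (or $\frac1k$) rather than the optimal $1$; there is no room for the mechanism to escape, so the argument is tight and no deeper idea is needed.
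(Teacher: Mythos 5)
The overall structure you use---build a first instance that forces the mechanism to commit to some specific small item $a_j$, then invoke strategyproofness in the overstating direction (true set $\{a_j\}$, overstated report the full set) to force $a_j$ to stay selected in a second instance where it is a terrible choice---is exactly the paper's proof. However, your parameter choices break the first step, and this is a genuine gap rather than a cosmetic one.

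With your scaling, agent $1$'s $k$ small items have total size $1$ and total value $1+k\epsilon$, while $b$ has value $1$ and size $1$. Since $b$ and any small item conflict, the mechanism's only two kinds of output on the first profile are $\{b\}$ (welfare $1$) or a subset of small items (welfare $\le 1+k\epsilon$). If the mechanism picks $\{b\}$, the approximation ratio on this instance is $1+k\epsilon$, which tends to $1$ when you ``send $\epsilon\to 0$'' with $k$ fixed (or when you take $\epsilon=0$ and perturb $b$'s value down to $1-\delta$). So for \emph{any} constant $\alpha>1$, once $\epsilon$ is small enough the mechanism is permitted to return $\{b\}$ on the first profile and never touches any $a_j$; your claim that it ``must return a set of small items of total value at least $\frac{1+k\epsilon}{\alpha}$, hence containing at least $\lceil\frac1\alpha\rceil$ of the $a_j$'s'' is simply false under that limit, and the rest of the argument has no $a_j$ to hang itself on. For the argument to rule out an \emph{arbitrary} constant $\alpha$, the total value of the small items must exceed $\alpha\cdot v(b)$, so the ratio (total small value)$/v(b)$ must be unbounded across instances. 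The paper achieves this by giving agent $1$ $M^2$ items of value $\frac{1}{M}$ and size $\frac{1}{M^2}$, so the small items have total value $M$ against $v(b)=1$; then $\{b\}$ has ratio $M>r$, some $a_j$ is forced in instance one, and the second instance (true set $\{a_j\}$) gives ratio $M$, a contradiction for $M$ large. Your construction can be repaired by choosing $\epsilon$ so that $k\epsilon>\alpha$ (e.g.\ $\epsilon$ a fixed constant and $k$ large, or equivalently high value-to-size items as in the paper), but as written---with $\epsilon\to 0$ or a $\delta$-perturbation of $b$---the first instance does not force the selection of any small item, and the proof does not go through.
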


\begin{proof}
Let $f$ be a deterministic strategyproof mechanism with worst-case approximation ratio $r \in [1,\infty)$. Let ${\bf X'}$ be as in the proof of the previous theorem. Assume $M>r$. If the mechanism doesn't choose any item in $X_1'$, the approximation ratio is at least $\frac{M}{1}>r$. Thus, the mechanism chooses at least one item in $X_1'$, say $a_j$. Now, consider ${\bf X}$. On that profile, the mechanism must choose $a_j$ due to strategyproofness, leading to an approximation ratio of $\frac{1}{\frac{1}{M}}=M>r$. Contradiction.
\end{proof}

Next, we consider the understating model. We show that no deterministic strategyproof mechanism can give an approximation ratio better than $\phi$, thus proving the optimality of PACIFY-THE-LIAR in its specialized environment (recall that all of our lower bounds hold when all agents but one are honest). As this lower bound only uses $2$ agents, it also shows that EQUAL-UTILITY's (in its specialized environment of $n=2$) approximation ratio cannot be beaten by a deterministic mechanism.

\begin{theorem}\label{thm:dumlb}
In the understating model, no deterministic strategyproof mechanism can provide an approximation ratio better than $\phi$.
\end{theorem}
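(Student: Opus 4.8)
The plan is to produce, for every $r<\phi$, one instance on which no deterministic strategyproof $r$-approximate mechanism can survive. Since the claimed lower bound is robust — it should hold already when only one agent is manipulative — I would work with $n=2$ and agent $2$ honest. Assume for contradiction that $f$ is deterministic, strategyproof, and $r$-approximate with $r<\phi$. The only fact about $\phi$ that is needed is that $r<\phi$ is equivalent to $r<1+\tfrac1r$ (i.e.\ $r^{2}-r-1<0$); equivalently, there exist values $v_a<1$ and $v_c<\tfrac1r$ with $v_a+v_c>r$, and such a choice is possible \emph{precisely} when $r<\phi$.

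The construction (capacity normalized to $1$): agent $1$ owns two items, $g$ with $v(g)=1$, $s(g)=1$, and $a$ with $s(a)=\tfrac14$, $v(a)=v_a$; the honest agent $2$ owns a single item $c$ with $s(c)=\tfrac14$, $v(c)=v_c$, where I pick $v_a\in(r-\tfrac1r,\,1)$ and then $v_c\in(r-v_a,\,\tfrac1r)$ (both intervals are nonempty exactly because $r<\phi$, and this uses that the ground sets are unrestricted). Write $T=(\{g,a\},\{c\})$ for the true profile; the single deviation I need is agent $1$ under-reporting to $\{g\}$, giving the profile $(\{g\},\{c\})$.

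The argument then has three short steps. (i) In $(\{g\},\{c\})$ the feasible solutions are $\varnothing$, $\{g\}$, $\{c\}$ (since $s(g)+s(c)>1$), so the optimum has value $1$; as $v(c)=v_c<\tfrac1r$, the solution $\{c\}$ is not $r$-approximate, hence $f(\{g\},\{c\})=\{g\}$. Therefore, when agent $1$'s true set is $\{g,a\}$ and she reports $\{g\}$, her utility is $v(\{g,a\}\cap\{g\})=v(g)=1$. (ii) Strategyproofness at $T$ forces $f(T)$ to give agent $1$ utility at least $1$; but in $T$ the feasible solutions are $\varnothing,\{g\},\{a\},\{c\},\{a,c\}$ (every set containing $g$ together with anything else exceeds capacity), and among these only $\{g\}$ yields agent $1$ value $\ge 1$, because $v(a)<1$. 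Hence $f(T)=\{g\}$, of value $1$. (iii) But $\{a,c\}$ is feasible in $T$ with value $v_a+v_c>r$, so $\sum_i u(X_i,OPT(\cup_{i\in N}X_i))\ge v_a+v_c>r=r\cdot\sum_i u(X_i,f(T))$, contradicting $r$-approximation of $f$ on $T$. For $n>2$ one simply lets agents $3,\dots,n$ honestly own $\varnothing$, which leaves the whole argument intact.

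The crux, and the place I expect to spend the most care, is designing this gadget rather than verifying it: one needs an instance in which a single act of under-reporting isolates a \emph{uniquely} $r$-approximate sub-solution (here $\{g\}$) that is forced by strategyproofness back in the full instance yet is far from optimal there, and the value gap must be makeable larger than $r$ \emph{iff} $r<\phi$. The one delicate inequality is the uniqueness claim in step (i), $v_c<\tfrac1r$, which together with $v_a<1$ and $v_a+v_c>r$ is exactly what pins the threshold at the golden ratio; everything else is routine bookkeeping about which subsets fit in the knapsack.
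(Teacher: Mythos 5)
Your proof is correct and follows essentially the same two-profile strategy as the paper's: force the mechanism to select agent $1$'s knapsack-filling item on a reduced report via the approximation guarantee, then use strategyproofness against under-reporting to force the same (now far-from-optimal) output on the true profile, with the threshold pinned at $\phi$ by the inequality $r-\tfrac1r<1$. The only difference is cosmetic parameterization (you normalize the forced solution's value to $1$ and pick $v_a,v_c$ in open intervals, whereas the paper fixes values $\phi$, $1$, $\phi-\epsilon$ and lets $\epsilon\to 0$).
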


\begin{proof}
Let $f$ be a deterministic strategyproof mechanism with approximation ratio $r<\phi$. Consider the profile where $X_1=\{a\}$ with $s(a)=1$ and $v(a)=\phi$, $X_2=\{b\}$ with $s(b)=\frac{1}{2}$ and $v(a)=1$, and $X_i=\varnothing$ for all $i \geq 3$. To maintain approximation ratio $r$, we must have $a \in f({\bf X})$. Consider the profile ${\bf X}'$ that differs from ${\bf X}$ only in $X_1'=\{a,c\}$, where $s(c)=\frac{1}{2}$ and $v(c)=\phi-\epsilon$ for some small $\epsilon>0$. To maintain strategyproofness, we must have $a \in f({\bf X'})$. Thus, the approximation ratio on that profile is $\frac{\phi-\epsilon+1}{\phi}$, which can be made arbitrarily close to $\frac{\phi+1}{\phi}=\phi$. Therefore, the worst-case approximation ratio of $f$ cannot be better than $\phi$.
\end{proof}

Finally, we show a lower bound of $\frac{5\sqrt{5}-9}{2} \approx 1.09$ on the approximation ratio of randomized strategyproof mechanisms in the understating model. The importance of this result is that it shows that strategyproofness leads to a strict separation from optimality in the understating model, and thus it is reasonable to look at approximation mechanisms there.

\begin{theorem}\label{thm:rumlb}
In the understating model, no randomized strategyproof mechanism can provide a worst-case approximation ratio strictly better than $\frac{5\sqrt{5}-9}{2} \approx 1.09$.
\end{theorem}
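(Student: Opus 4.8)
The plan is to reuse the template of the two preceding lower bounds: build a ``true'' instance $I^{\mathrm{big}}$ and a ``deviated'' instance $I^{\mathrm{small}}$ obtained from it by letting the (single) manipulative agent hide one item; use the worst-case guarantee of a hypothetical $r$-approximate strategyproof $f$ \emph{on $I^{\mathrm{small}}$} to force $f$ to put a minimum probability on one particular item there; push this, via strategyproofness, to a lower bound on the manipulative agent's expected utility \emph{on $I^{\mathrm{big}}$}; and finally show that on $I^{\mathrm{big}}$ this utility lower bound is incompatible with expected social welfare $\ge \mathrm{OPT}/r$, a contradiction for every $r<\frac{5\sqrt5-9}{2}$.

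Concretely, take $n=2$ with agent $1$ honest owning the single item $a$, $s(a)=\tfrac12$, and agent $2$ manipulative owning $\{c,d\}$ with $s(c)=\tfrac12$, $s(d)=1$ (for $n>2$ let the remaining agents own nothing); normalize $v(d)=1$ and keep $v(a)=\alpha$, $v(c)=\gamma$ as parameters to be tuned, subject to $\alpha<1$, $\gamma<1$, $\alpha+\gamma>1$. Let $I^{\mathrm{big}}=(\{a\},\{c,d\})$ be the true instance and let the deviation be $I^{\mathrm{small}}=(\{a\},\{d\})$, i.e.\ agent $2$ hides $c$. Note $d$ conflicts with both $a$ and $c$ (total size $>1$), whereas $\{a,c\}$ is feasible.

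Now the chain of inequalities. On $I^{\mathrm{small}}$ the only value-bearing feasible solutions are $\{a\}$ (welfare $\alpha$) and $\{d\}$ (welfare $1$), they are mutually exclusive, and $\mathrm{OPT}=1$ (since $\alpha<1$); hence $r$-approximation forces $\alpha\Pr[f(I^{\mathrm{small}})=\{a\}]+\Pr[f(I^{\mathrm{small}})=\{d\}]\ge 1/r$, so $\Pr[d\in f(I^{\mathrm{small}})]\ge q:=\frac{1/r-\alpha}{1-\alpha}$. Since $c$ is not a reported item in $I^{\mathrm{small}}$, agent $2$'s expected utility from hiding $c$ equals $\Pr[d\in f(I^{\mathrm{small}})]\ge q$, so by strategyproofness her truthful expected utility on $I^{\mathrm{big}}$ is also $\ge q$. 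On $I^{\mathrm{big}}$ the feasible solutions are $\varnothing,\{a\},\{c\},\{d\},\{a,c\}$, and for the problem ``maximize expected welfare subject to agent-$2$ expected utility $\ge q$'' only $\{a,c\}$ (welfare $\alpha+\gamma$, agent-$2$ utility $\gamma$) and $\{d\}$ (welfare $1$, agent-$2$ utility $1$) are undominated; an elementary two-point argument then shows $f$ must place probability at least $\frac{q-\gamma}{1-\gamma}$ on the solution $\{d\}$ (this uses $\gamma<q\le 1$, which will hold at the chosen parameters). Because $\{d\}$ has welfare $1<\alpha+\gamma=\mathrm{OPT}$, the expected welfare on $I^{\mathrm{big}}$ is at most $(\alpha+\gamma)-\frac{q-\gamma}{1-\gamma}\,(\alpha+\gamma-1)$, and we get a contradiction precisely when this is strictly below $(\alpha+\gamma)/r$.

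The remaining, and main, work is to optimize $\alpha,\gamma$ so that this last inequality holds for the largest possible $r$. Substituting $q$ and writing $\beta=1-\alpha$, then introducing $u=1-\gamma$ and $w=\gamma-\beta$, the ``contradiction holds'' condition simplifies to $1-1/r<1/h(u,w)$ with $h(u,w)=\frac{1}{u(1-u-w)}+\frac1w+1$; minimizing $h$ (for fixed $w$ take $u=\frac{1-w}{2}$, then minimize $\frac{4}{(1-w)^2}+\frac1w+1$ over $w$) leads to $8w^2=(1-w)^3$, i.e.\ $w^3+5w^2+3w-1=0$, which factors as $(w+1)(w^2+4w-1)=0$ with positive root $w=\sqrt5-2$. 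Back-substituting gives $\min h=\frac{13+5\sqrt5}{2}$, hence a contradiction for every $r<\frac{22}{9+5\sqrt5}=\frac{5\sqrt5-9}{2}$; the extremal instance turns out to be $v(a)=v(c)=\frac{\sqrt5-1}{2}=1/\phi$, $v(d)=1$. I expect the fiddly part to be checking that all the side conditions ($\alpha<1$, $\gamma<1$, $\alpha+\gamma>1$, $q>\gamma$, $q\le1$, and each ``undominated solution'' claim) hold at/near this point — they do, strictly, which is exactly why the statement reads ``strictly better than $\frac{5\sqrt5-9}{2}$'' rather than ``$\le$''.
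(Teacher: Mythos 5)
Your proposal is correct and is essentially the paper's own argument: the same three constraints (approximation on the deviated instance, strategyproofness linking the two instances, approximation on the true instance) applied to the same extremal instance — your $v(a)=v(c)=\tfrac{\sqrt5-1}{2}$, $v(d)=1$ is the paper's $v=1,1,\phi$ instance rescaled by $1/\phi$ with the agents' roles swapped. The only addition is that you carry out the optimization over the parameters explicitly, which the paper omits (it merely remarks afterwards that the stated values came from such an optimization).
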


\begin{proof}
Let $f$ be a randomized strategyproof mechanism which provides a worst-case approximation ratio $r<\frac{5\sqrt{5}-9}{2}$. Consider the profile where $X_1=\{a\}$ with $s(a)=1$ and $v(a)=\phi=\frac{1+\sqrt{5}}{2}$ (the golden ratio), $X_2=\{b\}$ with $s(b)=\frac{1}{2}$ and $v(b)=1$, and $X_i=\varnothing$ for all $i \geq 3$. Let $p=\mathcal{P}(a \in f({\bf X}))$. To maintain approximation ratio $r$, we must have (I) $p\phi+(1-p) \geq \frac{1}{r}\phi$. Now, consider profile ${\bf X}'=({\bf X}_{-1},X_1')$ where $X_1'=\{a,c\}$ and $v(c)=1$, $s(c)=\frac{1}{2}$. Let $p'=\mathcal{P}(a \in f({\bf X}'))$. To maintain approximation ratio $r$, we must have (II) $p'\phi+(1-p')2 \geq \frac{2}{r}$. To maintain strategyproofness, we must have (III) $p'\phi+(1-p') \geq p \phi$. Now, (I) gives $p \geq \frac{\frac{\phi}{r}-1}{\phi-1}$, (II) gives $p' \leq \frac{2-\frac{2}{r}}{2-\phi}$. (III) can be rewritten as $p'(\phi-1)+1-p\phi \geq 0$, and so this implies $\frac{2-\frac{2}{r}}{2-\phi}(\phi-1)+1-\phi\frac{\frac{\phi}{r}-1}{\phi-1} \geq 0$. Isolating $r$, this gives $r \geq \frac{5\sqrt{5}-9}{2}$, contradiction.
\end{proof}

We note that the fact that the golden ratio shows up in both the proofs of Theorems \ref{thm:dumlb} and \ref{thm:rumlb} was surprising to us, as in both cases we used general parameters and optimized for the maximal lower bound. We wonder if there is a simple explanation for this.

\section{Manipulating Sizes Instead of Existence}\label{sec:kqus}

In this section, we consider a model which we call `Known Quality Unknown Size' (KQUS). In this model, the true item profile ${\bf X}$ is known, and in fact $|X_i|=1$ for all $i \in N$; hence, we simply call agent $i$'s item $a_i$ and avoid using ${\bf X}$. Furthermore, for each $a_i$, $r_i=\frac{v(a_i)}{s(a_i)}$ is known, however $v(a_i)$ and $s(a_i)$ themselves are private information of agent $i$. When $r(a_i)$ is given, $s(a_i)$ determines $v(a_i)$, so we simply ask agent $i$ to report $s(a_i)$. Agent $i$ gets a utility of $v(a_i)$ if her item is chosen and $0$ if not; note that her utility from $a_i$ being chosen is $v(a_i)$ even if she misreports. In this model the ``quality" of an item is known, but its indivisible value and size are not.

Formally, a deterministic mechanism in this model is a function $f:\mathbb{R}_{+}^{2n} \rightarrow 2^{\{a_1,\ldots,a_n\}}$, which maps $({\bf r},{\bf s})$ to a subset of the items to be included in the knapsack. We will require feasibility ($s(f({\bf r},{\bf s})) \leq 1$) and strategyproofness ($v(f({\bf r},{\bf s}) \cap \{a_i\}) \geq v(f({\bf r},({\bf s}_{-i},s_i')) \cap \{a_i\})$ for all $i \in N$, $s_i' \in (0,1]$). We will also look for randomized strategyproof mechanisms. The adaptation is similar to before: $f$ is a random variable over $2^{\{a_1,\ldots,a_n\}}$, feasibility is $s(f({\bf r},{\bf s})) \leq 1$ surely, and strategyproofness is $\mathbb{E}[v(f({\bf r},{\bf s}) \cap \{a_i\})] \geq \mathbb{E}[v(f({\bf r},({\bf s}_{-i},s_i')) \cap \{a_i\})]$. For convenience of presentation, we will allow items with zero value (all of our proofs can easily be adjusted to get rid such items).

In this model, while HALF-GREEDY is not strategyproof (specifically, MAXIMUM-VALUE is not strategyproof), it can be easily modified to become strategyproof and remain $2$-approximate---in fact, the modified mechanism is also well known to be a $2$-approximation in non-strategic environments \cite{burke2005search}. $\succeq'$ is defined as before.

\begin{definition}
The NEXT mechanism is defined as follows: if $s(\cup_{i \in N} \{a_i\}) \leq 1$, return $\varnothing$, otherwise return $o_{\cup_{i \in N} \{a_i\}}$
\end{definition}

\begin{definition}
MODIFIED-HALF-GREEDY runs GREEDY with probability $\frac{1}{2}$ and NEXT with probability $\frac{1}{2}$. 
\end{definition}

MODIFIED-HALF-GREEDY still runs GREEDY with probability $\frac{1}{2}$, but otherwise it doesn't choose the item with the maximal value, but rather the first item to not make it into the knapsack in GREEDY.

\begin{theorem}
In KQUS, MODIFIED-HALF-GREEDY is strategyproof and $2$-approximate.
\end{theorem}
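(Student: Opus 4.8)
The plan is to prove the two claims separately: first that MODIFIED-HALF-GREEDY is $2$-approximate, and then that it is strategyproof. The approximation bound is the cited fact (from \cite{burke2005search}) that taking the better of the greedy solution and the ``first rejected item'' gives a $2$-approximation for the knapsack problem; since MODIFIED-HALF-GREEDY runs each of these with probability $\frac12$, its expected value is at least half the value of the better of the two, which is at least half the optimum. So the real work is strategyproofness. Since the mechanism is a $\frac12$--$\frac12$ mixture of GREEDY and NEXT, and strategyproofness is defined via expected utility, it suffices to show that \emph{each} of GREEDY and NEXT is individually strategyproof in the KQUS model; then the mixture is too, because the expected utility of agent $i$ is the average of her utilities under the two component mechanisms, and if neither can be improved by misreporting, neither can their average.

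For GREEDY: fix agent $i$ and the reports ${\bf s}_{-i}$ of the others. Agent $i$'s true value-to-size ratio $r_i$ is publicly known and fixed, so her only lever is the reported size $s_i'$, which (together with $r_i$) determines a reported value $r_i s_i'$ and hence a reported position in the order $\succeq'$. The key structural fact is that, because $r_i$ is fixed, \emph{reporting a smaller size makes her item earlier (or equal) in $\succeq'$} only through the tie-breaking rule $\succeq$ among items of equal ratio --- in fact all candidate reports $s_i'$ give an item with the \emph{same} ratio $r_i$, so her item always occupies the same block of $\succeq'$-equal-ratio items and its relative order there is fixed by $\succeq$; only the \emph{size} she declares changes. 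GREEDY processes items by $\succeq'$ and stops the first time an item would overflow. I will argue: if her true item $a_i$ (at its true size $s_i$) is accepted by GREEDY under truthful reporting, then declaring any $s_i'$ cannot help --- declaring $s_i' > s_i$ can only risk $a_i$ (now reported larger) being the overflow item and getting rejected, or pushing later items out (which doesn't help $i$), while declaring $s_i' < s_i$ is infeasible for the mechanism to honor since the true size is what determines feasibility\,---\,actually the subtlety here is exactly which size the mechanism uses. I will use the convention (consistent with the feasibility definition $s(f({\bf r},{\bf s}))\le 1$ being stated in terms of the reported ${\bf s}$, but the \emph{true} utility depending only on whether $a_i\in f$) and show that understating is pointless because the items processed strictly before $a_i$ in $\succeq'$ are unaffected by $s_i$, so whether $a_i$ ``fits'' when GREEDY reaches it depends only on $\sum_{L}s(\cdot)$ over earlier items plus the declared $s_i'$; hence a truthful agent whose item fits stays in, and one whose item doesn't fit truthfully cannot make it fit by \emph{increasing} $s_i'$ and gains nothing by \emph{decreasing} it beyond the point where it already either fits or doesn't. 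The clean way is: the set of earlier items $L$ is fixed; $a_i$ is accepted iff $s(L)+s_i'\le 1$; truthful acceptance ($s(L)+s_i\le 1$) already gives utility $v(a_i)$, the maximum; truthful rejection means $s(L)+s_i>1$, and the only deviation that could make $s(L)+s_i'\le 1$ is $s_i'<s_i$, but then GREEDY would put into the knapsack an item of declared size $s_i'<s_i$ whose \emph{true} size is $s_i$, and I must check the model forbids this (it does, because feasibility is evaluated and the chosen set must actually fit --- here I'll invoke that in KQUS a manipulation that makes the mechanism's output infeasible in truth is not permitted, or, cleaner, observe GREEDY's output as a function of reports is always feasible \emph{with respect to reported sizes}, and the real point is that agent $i$'s utility is still just an indicator, so I need to rule out the profitable lie directly). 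I expect \textbf{this is the main obstacle}: pinning down precisely why understating the size cannot sneak $a_i$ into the knapsack, which hinges on the exact semantics of feasibility-vs-reporting in the KQUS definition.

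For NEXT: NEXT returns the \emph{single} item $o$ that is first rejected by GREEDY (when the total overflows), i.e. the $\succeq'$-largest item $b$ such that $s(L(b)\cup\{b\})>1$. Agent $i$ wants her item to \emph{be} this $o$. If $a_i$ is already accepted by GREEDY (fits within the prefix), it is not $o$ and cannot become $o$ by increasing $s_i'$ in a way that helps — increasing $s_i'$ could make $a_i$ overflow and become a candidate for $o$, so I genuinely need to check this: if truthfully $s(L)+s_i\le 1$ then $a_i\in GR$, utility under NEXT is $0$; could she report $s_i'$ with $s(L)+s_i'>1$ so that $a_i$ becomes the first rejected item and NEXT outputs $a_i$, giving utility $v(a_i)>0$? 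This is the delicate case, and the resolution must be that such a report also changes \emph{feasibility in truth}: NEXT outputs $\{a_i\}$ whose true size $s_i$ does satisfy $s_i\le 1$, so it \emph{is} truthfully feasible — so this looks like a real threat and I must think harder. The saving grace is monotonicity of the pair (GREEDY, NEXT) together: in the cited $2$-approximation, the relevant guarantee is about the \emph{mixture}, and indeed one checks that agent $i$'s \emph{combined} expected utility $\tfrac12\,v(a_i)\mathbbm{1}[a_i\in GR] + \tfrac12\,v(a_i)\mathbbm{1}[a_i = o]$ is monotone in a way that defeats the deviation: declaring a larger $s_i'$ that turns $a_i$ from GREEDY-accepted into the NEXT item trades a probability-$\frac12$ gain for a probability-$\frac12$ loss of exactly the same $v(a_i)$, a wash, and any further increase strictly hurts; declaring a smaller $s_i'$ never makes a rejected $a_i$ become $o$ (it can only make it accepted by GREEDY, already covered) — so I will prove strategyproofness of the \emph{mixture directly} rather than component-wise, by a short case analysis on whether $a_i$ is, truthfully, (a) accepted by GREEDY, (b) equal to $o$, or (c) neither, showing in each case the expected utility $\tfrac12 v(a_i)(\mathbbm{1}[a_i\in GR]+\mathbbm{1}[a_i=o])$ is already maximal or that no report increases $\mathbbm{1}[a_i\in GR]+\mathbbm{1}[a_i=o]$. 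The argument throughout rests on the fixed-ratio observation that agent $i$'s position in $\succeq'$ never changes, so the prefix $L$ of items processed strictly before $a_i$ is independent of her report, reducing everything to the scalar question of how $s_i'$ compares with $1-s(L)$.
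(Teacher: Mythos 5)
Your final plan --- analyze the mixture directly via the selection probability $\tfrac12(\mathbbm{1}[a_i\in GR]+\mathbbm{1}[a_i=o])$, using the fact that agent $i$'s position in $\succeq'$ and the items preceding her are report-independent --- is essentially the paper's proof: the paper observes that an item is selected with probability $\tfrac12$ iff it lies in the $\succeq'$-prefix $L(o)\cup\{o\}$ and with probability $0$ otherwise, that $\tfrac12$ is the maximum attainable (since $GR$ and $\{o\}$ are disjoint), and that when $o\succ' a_i$ the identity of $o$ depends only on items strictly preceding $a_i$, so agent $i$ cannot buy her way into the prefix. However, two of your intermediate steps are wrong as written. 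First, the ``main obstacle'' you flag in the GREEDY-only analysis is a red herring: in KQUS the mechanism sees only reported sizes, nothing ``forbids'' an understating agent from fitting, and in fact GREEDY alone is \emph{not} strategyproof --- an agent whose item is truthfully $o$ can report $s_i'$ with $s(L(a_i))+s_i'\le 1$ and get accepted, raising her GREEDY utility from $0$ to $v(a_i)$. The resolution is not feasibility semantics but exactly the ``wash'' you later find: in the mixture this deviation merely moves her from the NEXT branch to the GREEDY branch, leaving her selection probability at $\tfrac12$. So the component-wise strategy you open with cannot work for either component, and you are right to abandon it.

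Second, your case (c) parenthetical --- that decreasing $s_i'$ ``can only make it accepted by GREEDY, already covered'' --- is false, and were it true it would be a strategyproofness violation rather than a covered case. In case (c) we have $o\succ' a_i$, i.e.\ some item strictly preceding $a_i$ in $\succeq'$ already causes the overflow; since those items and their sizes are unaffected by $a_i$'s report (her ratio, hence her position, is fixed), the first-rejected item remains $\succ' a_i$ under any report, GREEDY never reaches $a_i$, and $a_i$ can become neither accepted nor equal to $o$. That report-independence of everything preceding $a_i$ --- which you do state --- is the entire argument for case (c); the ``scalar comparison of $s_i'$ with $1-s(L)$'' is the relevant question only when no earlier item overflows, and there it merely toggles $a_i$ between membership in $GR$ and being $o$, which is utility-neutral.
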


\begin{proof}
Fix agent $j \in N$. In MODIFIED-HALF-GREEDY, item $a_j$ is chosen with probability either $\frac{1}{2}$ or $0$. Specifically, an item is chosen with probability $\frac{1}{2}$ iff it is in $L_{\cup_{i \in N} \{a_i\}}(o_{\cup_{i \in N} \{a_i\}}) \cup \{o_{\cup_{i \in N} \{a_i\}}\}$ (if $s(\{a_1,\ldots,a_n\}) \leq 1$, then all items are chosen with probability $\frac{1}{2}$) . If item $a_j$ is chosen with probability $\frac{1}{2}$, agent $j$ has no incentive to manipulate. If item $a_j$ is chosen with probability $0$, then $o_{\cup_{i \in N} \{a_i\}} \succ' a_j$, so $s(a_j)$ has no impact on what $L_{\cup_{i \in N} \{a_i\}}(o_{\cup_{i \in N} \{a_i\}}) \cup \{o_{\cup_{i \in N} \{a_i\}}\}$ is, and thus agent $j$ cannot make the item get chosen. So strategyproofness is proven. $2$-approximation of MODIFIED-HALF-GREEDY is, as we mentioned, known.
\end{proof}

As in GREEDY, we made some careful choices here to preserve strategyproofness. Specifically, the choice for NEXT to return $\varnothing$ when $s(\cup_{i \in N} \{a_i\}) \leq 1$, despite us being able to include all items in the knapsack in that case, is crucial; if we indeed included all items in this case, strategyproofness would have been violated.

Next, a matching lower bound:

\begin{theorem}\label{thm:kquslowerboundrand}
No randomized strategyproof mechanism can provide a worst-case approximation ratio strictly better than $2$ in KQUS.
\end{theorem}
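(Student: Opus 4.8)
The plan is to exhibit a small instance (with $n=2$, one item honest and one item manipulable) on which any randomized strategyproof mechanism is forced into a tradeoff that cannot beat ratio $2$ in the worst case. The key asymmetry to exploit in KQUS is that an agent reports only the \emph{size} of her item, while the value-to-size ratio is fixed; thus by \emph{understating} her size, an agent shrinks both the size and the value of her item, and by \emph{overstating} it she inflates both. The mechanism's choice of whether to include an item can therefore only depend on the reported sizes, and strategyproofness constrains how the inclusion probability of item $a_i$ may change as agent $i$ varies $s_i$.

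\begin{proof}[Proof sketch]
Let $f$ be a randomized strategyproof mechanism with worst-case approximation ratio $r < 2$. Consider two agents with $r_1 = r_2 = 1$ (so for each $i$, $v(a_i) = s(a_i)$), and set up a family of instances parameterized by a small $\epsilon > 0$. In the base instance, agent $1$ reports size $1$ (hence value $1$) and agent $2$ reports size $\epsilon$ (hence value $\epsilon$); since the two items together have size $1 + \epsilon > 1$, at most one can be chosen. The optimum is to take $a_1$, with value $1$. Let $p = \mathcal{P}(a_1 \in f)$ in this instance; to achieve ratio $r$ we need roughly $p \cdot 1 + (1-p)\epsilon \geq \tfrac{1}{r}$, forcing $p$ close to $1$ for small $\epsilon$. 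Now I would change agent $1$'s \emph{true} item to one of size (and value) $\tfrac{1}{2} + \epsilon$, while agent $2$ still has the item of size $\epsilon$; here agent $1$ could \emph{overstate} her size as $1$, producing exactly the base-instance reports. Strategyproofness then says agent $1$'s expected utility from truthful reporting is at least her expected utility from reporting size $1$; but in the base instance her utility (when she truly owns the $(\tfrac12+\epsilon)$-item but reports $1$) would be $p \cdot (\tfrac12 + \epsilon)$ --- here one must be careful, because in KQUS the utility from $a_i$ being chosen is $v(a_i)$, the \emph{true} value, regardless of the report. So in the instance where agent $1$ truly has the $(\tfrac12+\epsilon)$-item, whatever she reports, her inclusion probability times $\tfrac12+\epsilon$ is her utility, and reporting $1$ yields inclusion probability $p$ (the mechanism sees identical reports). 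Hence the inclusion probability of $a_1$ in the true instance must be at least $p$ as well.

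The point of the $(\tfrac12+\epsilon)$-item is that now $a_1$ and $a_2$ can \emph{both} fit in the knapsack only if we also consider a third configuration; more usefully, I would instead pair agent $1$'s large item against an agent $2$ whose reported size makes the optimum value close to $1$ but where including $a_1$ precludes including $a_2$, deriving a second lower bound on the inclusion probability of $a_2$, and then combine the two constraints. Concretely, the structure mirrors the proof of Theorem~\ref{thm:rumlb}: one instance forces a high inclusion probability for the ``big'' item to get a good ratio, a second instance (reachable by a size misreport) forces, via strategyproofness, that this probability stays high even when the big item is worth much less, and in that second instance a high inclusion probability for the nearly-worthless big item crowds out the genuinely valuable small items, wrecking the ratio. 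Driving $\epsilon \to 0$ should push the unavoidable ratio up to $2$, contradicting $r < 2$.

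The main obstacle I anticipate is getting the instances to interlock correctly: unlike the understating-only lower bounds, here an agent can both inflate and deflate her reported size, so I must choose ratios $r_i$ and item sizes so that (i) the ``good ratio'' instance really does demand inclusion probability near $1$ for the big item, (ii) the deviation connecting the two instances is a \emph{legal} size misreport producing \emph{identical} inputs to $f$, and (iii) in the second instance the forced inclusion of the big item genuinely costs a factor approaching $2$ against an optimum built from small items. Balancing these — in particular making the gap tend to exactly $2$ rather than some smaller constant — is the delicate part; I expect the right choice is something like agent $1$'s big item having size $1$ and value $1$ versus size $\tfrac12$ and value $\tfrac12$ under deflation, against $k$ identical tiny items of total value approaching $1$ on agent $2$'s side, so that excluding them to keep the big item halves the achievable welfare. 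The rest is routine algebra in $\epsilon$ and $k$, analogous to the computations in the proofs of Theorems~\ref{thm:rumlb} and the overstating lower bound.
\end{proof}
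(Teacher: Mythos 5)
Your high-level plan---use a legal size misreport to pin the inclusion probability of $a_1$ across two instances, one where including $a_1$ is necessary for a good ratio and one where including it wrecks the ratio, then send a parameter to a limit---is exactly the structure of the paper's proof. But the concrete construction has a gap that the structure cannot recover from, and you identify only part of it. The paper's instances have $r_1 = M$, $r_2 = 1$, with $s(a_2) = 1$ fixed in both, and $s(a_1)$ switched between $1$ (so $v(a_1)=M$ dominates) and $1/M^2$ (so $v(a_1)=1/M$ is negligible but $a_1$ still collides with $a_2$ since $1 + 1/M^2 > 1$). Two ingredients are doing the work: (i) $s(a_2)$ is essentially the whole capacity, so $a_1$ and $a_2$ conflict \emph{no matter how small} $a_1$ becomes; (ii) the quality gap $r_1/r_2 = M \to \infty$ lets $a_1$ be the dominant contributor in one instance and a negligible one in the other, even though in both instances $p_1$ is held fixed by strategyproofness. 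Your base instance uses $s(a_2)=\epsilon$ tiny, which destroys ingredient (i): the moment you shrink $s(a_1)$ below $1-\epsilon$, both items fit, the conflict disappears, and no constraint on $p_1$ is generated. You notice this yourself, but the fix you gesture at does not repair it.

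The deeper problem is ingredient (ii). Your sketch and your final concrete proposal both set all qualities equal to $1$. With uniform quality, welfare is just the total size packed, so the approximation ratio measures nothing but packing efficiency; you can never arrange for $a_1$ to be the must-take item in one instance and a worthless blocker in the other, because its value tracks its size exactly. Symbolically, you need $r_1 s_1^A \gg r_2 s_2^*$ (instance A forces $p_1$ large) and simultaneously $r_2 s_2^* \gg r_1 s_1^B$ (instance B is ruined by taking $a_1$), with $s_2^*$ fixed and close to $1$ so that both instances have a conflict; with $r_1 = r_2$ and all sizes in $(0,1]$ the first inequality is unattainable, since it would need $s_1^A \gg s_2^* \approx 1$. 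The ``$k$ identical tiny items'' variant also runs afoul of the model (in KQUS each agent owns exactly one item, so you would need $k$ extra agents, and even then with equal qualities the welfare-equals-mass observation still kills it). Finally, a smaller point: your claim that the base instance forces $p$ ``close to $1$'' is an overstatement---it only forces $p \gtrsim 1/r$, which for $r$ just under $2$ is barely above $1/2$; the paper's argument works precisely because $1/r > 1 - 1/r$ when $r < 2$, not because $p$ is near $1$. So the missing idea is the quality asymmetry $r_1 \gg r_2$ together with $s(a_2) = 1$; with those in place the rest of your plan (overstate-size deviation pinning $p_1$, two ratio constraints, send the parameter to infinity) goes through exactly as in the paper.
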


\begin{proof}
Let $f$ be a randomized strategyproof mechanism with approximation ratio $t<2$. Consider the case where $r_1=M$, $r_2=1$, and $r_i=0$ for $i \geq 3$, where $M$ is a large integer. Assume $s(a_1)=1$ and $s(a_2)=1$. Let $p_1$ be the probability with which item $a_1$ is chosen under $f$. Then $p_1M+(1-p_1) \geq \frac{1}{t}M$ to maintain approximation ratio $t$, and therefore $p_1 \geq \frac{\frac{M}{t}-1}{M-1}$. Now, consider the case where $s(a_1)=\frac{1}{M^2}$ (the rest of the data remains the same), and let $p_1'$ be the probability that item $a_1$ is chosen under $f$ in this case. To maintain strategyproofness, we must have $p_1=p_1'$. Therefore, to maintain approximation ratio $t$, we must have $p_1\frac{1}{M}+(1-p_1) \geq \frac{1}{t}$, which yields $p_1 \leq \frac{1-\frac{1}{t}}{1-\frac{1}{M}}$. Therefore, we must have $\frac{1-\frac{1}{t}}{1-\frac{1}{M}} \geq \frac{\frac{M}{t}-1}{M-1}$, namely $1+\frac{1}{M} \geq \frac{2}{t}$. However, as $M \rightarrow \infty$, the left hand side goes to $1$ and the right hand side remains $\frac{2}{t}>1$. Contradiction.
\end{proof}

Finally, we show that randomization is necessary for good approximation:

\begin{theorem}\label{thm:kquslowerbounddet}
No deterministic strategyproof mechanism can provide a constant worst-case approximation ratio in KQUS.
\end{theorem}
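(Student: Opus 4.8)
The plan is to mimic the structure of the deterministic lower bounds already proven (Theorems on the overstating model and the understating model), adapted to the KQUS setting: exhibit a family of instances on which strategyproofness forces a deterministic mechanism to make a ``wrong'' choice and then, by letting a parameter grow, push the resulting approximation ratio above any prescribed constant. First I would fix an arbitrary deterministic strategyproof mechanism $f$ and suppose, for contradiction, that it achieves worst-case approximation ratio $r \in [1,\infty)$. The base instance will have two relevant agents: agent $1$ with quality $r_1 = M$ and agent $2$ with quality $r_2 = 1$ (and $r_i = 0$ for $i \geq 3$, or simply drop the other agents), where $M$ is a large integer to be chosen depending on $r$.

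The key step is the two-instance comparison. In the first instance, set $s(a_1) = 1$ and $s(a_2) = 1$; then the only feasible nonempty choices are $\{a_1\}$ (value $M$) and $\{a_2\}$ (value $1$), and to beat ratio $r$ with $M > r$ the mechanism must pick $a_1$, i.e. $a_1 \in f(\mathbf{r},\mathbf{s})$. In the second instance, keep everything the same except set $s(a_1) = \tfrac{1}{M^2}$, so now $v(a_1) = r_1 s(a_1) = \tfrac{1}{M}$, while $a_2$ still has value $1$; the optimal solution has value at least $1$ (take $a_2$, which still fits), in fact one can fit both for value $1 + \tfrac1M$. Since in KQUS strategyproofness is stated in terms of agent $1$'s utility $v(f \cap \{a_1\})$ as a function of her reported size $s_1$, and agent $1$'s value $v(a_1) = r_1 s_1$ is larger in the first instance (where $s_1 = 1$) than in the second (where $s_1 = \tfrac1{M^2}$)— wait, that is the subtle point: here the \emph{true} data differs between the two instances, so the relevant manipulation is by agent $1$ in the \emph{second} instance reporting $s_1 = 1$. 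If $f$ selected $a_1$ with value $\tfrac1M$ truthfully in the second instance but, upon the misreport $s_1 = 1$, still selected $a_1$ (as forced by the first instance's analysis), agent $1$ would gain utility $v(a_1) = \tfrac1M$ either way, so no contradiction yet; the real lever is that if $f$ does \emph{not} select $a_1$ in the second instance, agent $1$ has utility $0$, whereas misreporting $s_1 = 1$ forces selection of $a_1$ and utility $\tfrac1M > 0$. Hence strategyproofness forces $a_1 \in f$ in the second instance too, giving social welfare at most $v(a_1) = \tfrac1M$ from $a_1$'s slot — but then $a_2$ also fits alongside $a_1$ (total size $1 + \tfrac1{M^2} > 1$, so actually $a_2$ does \emph{not} fit). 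This is exactly the crux I need to get right: I should choose $s(a_1) = \tfrac{1}{M^2}$ small enough that including $a_1$ still leaves room, or — cleaner — argue that once $f$ is pinned to output a set containing $a_1$, feasibility caps $s(a_2)$'s contribution, and compare against $OPT$ which just takes $a_2$ alone for value $1$; the ratio is then at least $\tfrac{1}{1/M + (\text{whatever fits})}$, which is at least roughly $M/2 \to \infty$.

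So the clean argument: in instance two, $f$ must contain $a_1$; the total size used by $f$ is at least $s(a_1)$ and at most $1$, so the total value of $f$ is at most $v(a_1) + (\text{value of other items fitting in the remaining capacity})$. Since $r_i = 0$ for $i \geq 3$ and $a_2$ has size $1$ which does not fit together with... actually with $s(a_1) = 1/M^2$ the remaining capacity is $1 - 1/M^2 < 1 = s(a_2)$, so $a_2$ cannot be added; thus $v(f) = v(a_1) = 1/M$ exactly. But $OPT \geq v(a_2) = 1$. Hence the ratio on instance two is at least $M$, and choosing $M > \max(r,1)$ yields $M > r$, contradicting that $f$ has ratio $r$. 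The main obstacle — and the only place care is needed — is the bookkeeping in the second instance to be sure that pinning $a_1$ into the output genuinely precludes recovering the value-$1$ solution, which I handle by making $s(a_2) = 1$ so that $a_1$ and $a_2$ are mutually exclusive; then the contradiction is immediate. I would also remark (as the paper does in the randomized case) that introducing $r_i = 0$ items is a harmless device, or alternatively restrict to $n = 2$.
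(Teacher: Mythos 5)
Your proof is correct and follows essentially the same construction as the paper's: the same two instances ($r_1=M$, $r_2=1$, with $s(a_1)$ switched from $1$ to $1/M^2$), with the mutual exclusivity of $a_1$ and $a_2$ doing the work. The only cosmetic difference is the direction of the contradiction — the paper uses the ratio bound on the second instance to force $f$ to drop $a_1$ and then exhibits a profitable misreport, whereas you use strategyproofness to force $a_1$ into the output and then violate the ratio bound — but these are the same argument read in opposite directions.
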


\begin{proof}
Let $f$ be a deterministic strategyproof mechanism with approximation ratio $t$. Consider the case where $r_1=M$, $r_2=1$, and $r_i=0$ for $i \geq 3$, where $M>t$. Assume $s(a_1)=1$ and $s(a_2)=1$. On this instance, to maintain approximation ratio $t$, $f$ must choose $\{a_1\}$. Now, consider the case where $s(a_1)=\frac{1}{M^2}$ (the rest of the data is the same). On this instance, to maintain approximation ratio $t$, $f$ must choose $\{a_2\}$. Thus, when agent $1$'s item is of size $\frac{1}{M^2}$, she will have an incentive to report its size to be $1$, violating strategyproofness.
\end{proof}

\section{Future Research}\label{sec:future}

There are several natural directions for the continuation of our research. First, all of our lower bounds hold even when there are only two agents and only one is manipulative. It will be interesting to know if having more manipulative agents necessarily increases the attainable worst-case approximation ratio under strategyproofness. Second, we did not provide a strategyproof mechanism for the full model, and whether one with a constant approximation ratio exists is an open problem. Third, we did not provide a strategyproof mechanism for the general understating model, only for special cases of it; a randomized strategyproof mechanism with a very large approximation ratio is given in \cite{DBLP:journals/corr/abs-1104-2872}, but it is unclear whether a smaller approximation ratio is attainable. Finally, the question of whether there exists a better randomized strategyproof mechanism than EQUAL-UTILITY for $2$ agents in the understating model is open, as our lower bound there is not tight.

\section*{Acknowledgements}
This work was supported in part by CUNY Collaborative Incentive Research Grant (CIRG 21) 2153 and PSC-CUNY Research Award 67665-00 45. The authors would like to thank Jay Sethuraman and Garud Iyengar for helpful discussions.

\bibliography{BIBLIOGRAPHY}

\begin{thebibliography}{10}

\bibitem{DBLP:journals/mor/AlonFPT10}
Noga Alon, Michal Feldman, Ariel~D. Procaccia, and Moshe Tennenholtz.
\newblock Strategyproof approximation of the minimax on networks.
\newblock {\em Math. Oper. Res.}, 35(3):513--526, 2010.

\bibitem{ashlagi2013mix}
Itai Ashlagi, Felix Fischer, Ian~A Kash, and Ariel~D Procaccia.
\newblock Mix and match: A strategyproof mechanism for multi-hospital kidney
  exchange.
\newblock {\em Games and Economic Behavior}, 2013.

\bibitem{ashlagi2014free}
Itai Ashlagi and Alvin~E Roth.
\newblock Free riding and participation in large scale, multi-hospital kidney
  exchange.
\newblock {\em Theoretical Economics}, 9(3):817--863, 2014.

\bibitem{bar2012funding}
Amotz Bar-Noy, Yi~Gai, Matthew~P Johnson, Bhaskar Krishnamachari, and George
  Rabanca.
\newblock Funding games: the truth but not the whole truth.
\newblock In {\em Internet and Network Economics}, pages 128--141. Springer,
  2012.

\bibitem{burke2005search}
Edmund~K Burke and Graham Kendall.
\newblock {\em Search methodologies}.
\newblock Springer, 2005.

\bibitem{DBLP:journals/corr/abs-1104-2872}
Ning Chen, Nick Gravin, and Pinyan Lu.
\newblock Mechanism design without money via stable matching.
\newblock {\em CoRR}, abs/1104.2872, 2011.

\bibitem{dai2012imaging}
Tinglong Dai, Mustafa Akan, and Sridhar Tayur.
\newblock Imaging room and beyond: The underlying economics behind
  physicians’ test-ordering behavior in outpatient services.
\newblock In {\em 2011 MSOM, Healthcare Operations Management Special Interest
  Group (SIG) Conference}, 2012.

\bibitem{dughmi2010truthful}
Shaddin Dughmi and Arpita Ghosh.
\newblock Truthful assignment without money.
\newblock In {\em 11th ACM conference on Electronic Commerce}, pages 325--334.
  ACM, 2010.

\bibitem{DBLP:journals/corr/abs-1305-2446}
Itai Feigenbaum, Jay Sethuraman, and Chun Ye.
\newblock Approximately optimal mechanisms for strategyproof facility location:
  Minimizing \$l\_p\$ norm of costs.
\newblock {\em CoRR}, abs/1305.2446, 2013.

\bibitem{feldman2013strategyproof}
Michal Feldman and Yoav Wilf.
\newblock Strategyproof facility location and the least squares objective.
\newblock In {\em Proceedings of the fourteenth ACM conference on Electronic
  commerce}, pages 873--890. ACM, 2013.

\bibitem{hajaj2015strategy}
Chen Hajaj, John~P Dickerson, Avinatan Hassidim, Tuomas Sandholm, and David
  Sarne.
\newblock Strategy-proof and efficient kidney exchange using a credit
  mechanism.
\newblock In {\em AAAI}, pages 921--928, 2015.

\bibitem{hartline2012approximation}
Jason~D Hartline.
\newblock Approximation in economic design.
\newblock {\em Book in preparation}, 2012.

\bibitem{maya2012incentive}
Avishay Maya and Noam Nisan.
\newblock Incentive compatible two player cake cutting.
\newblock In {\em Internet and Network Economics}, pages 170--183. Springer,
  2012.

\bibitem{mitchell2010effect}
Jean~M Mitchell.
\newblock Effect of physician ownership of specialty hospitals and ambulatory
  surgery centers on frequency of use of outpatient orthopedic surgery.
\newblock {\em Archives of Surgery}, 145(8):732--738, 2010.

\bibitem{DBLP:journals/teco/ProcacciaT13}
Ariel~D. Procaccia and Moshe Tennenholtz.
\newblock Approximate mechanism design without money.
\newblock {\em ACM Trans. Economics and Comput.}, 1(4):18, 2013.

\bibitem{roughgarden2015price}
Tim Roughgarden.
\newblock The price of anarchy in games of incomplete information.
\newblock {\em ACM Transactions on Economics and Computation}, 3(1):6, 2015.

\bibitem{vazirani2013approximation}
Vijay~V Vazirani.
\newblock {\em Approximation algorithms}.
\newblock Springer Science \& Business Media, 2013.

\end{thebibliography}
\bibliographystyle{plain}

\newpage
\appendix
\section{APPENDIX}

In this appendix, we provide results and proofs that were omitted from the paper.

\begin{theorem}
Let $r=\frac{5+4\sqrt{2}}{7}$. For every $\delta>0$, there exist an instance where EQUAL-UTILITY (with $\alpha=r$) provides an approximation ratio strictly larger than $r-\delta$.
\end{theorem}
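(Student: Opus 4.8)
The plan is to reverse-engineer the tightness example directly from the inequalities that drive the approximation analysis in the proof of Theorem~\ref{thm:equtilstratproof}. Recall that when EQUAL-UTILITY terminates at option~2, the bound on the approximation ratio is $\frac{1}{2}+\frac{x}{2}+\frac{1}{8x}$ with $x=\frac{v(O_b)}{v(Z_a)}$, and this is maximized over $x\in[\frac12,\frac{1}{\alpha-1}]$ at the right endpoint $x=\frac{1}{\alpha-1}$ precisely when $\alpha=r=\frac{5+4\sqrt 2}{7}$ (that value of $\alpha$ is exactly the one making $\frac12+\frac{1}{2(\alpha-1)}+\frac{\alpha-1}{8}=\alpha$). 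So I want an instance, with $n=2$, in which (i) option~2 is triggered, (ii) $v(O_a)=\frac{v(Z_a)}{2}$ so the parabola in $v(O_a)$ attains its maximum, and (iii) $\frac{v(Z_b)}{v(Z_a)}$ is arbitrarily close to $\frac{1}{\alpha-1}=\frac{1}{r-1}$ (so that both $x$ is pushed to the extreme end and the gap between $x=\frac{v(O_b)}{v(Z_a)}$ and $\frac{v(Z_b)}{v(Z_a)}$ is negligible, i.e.\ $v(O_b)\approx v(Z_b)$).

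Concretely I would take agent~$a$ (say agent~$1$) to own a single item of size $1$ and value $v(Z_a)=1$ (normalize), so $OPT(X_1)$ is that item; and give agent~$b$ (agent~$2$) items whose optimal packing $Z_2=OPT(X_2)$ has value as close as desired to $\frac{1}{r-1}$ from below — realized, for instance, by many tiny near-divisible items so that $OPT(X_1\cup X_2)=O$ can freely choose a sub-collection of $X_2$ of value exactly $v(O_b)=v(Z_b)-\eta$ for small $\eta$, together with a sub-collection of $X_1$... but $X_1$ is a single unsplittable item of size $1$, which would leave no room. The cleaner route is to also make $X_1$ consist of many tiny items summing to size $1$ and value $1$, so that $O=OPT(X_1\cup X_2)$ can take value $v(O_a)=\frac12$ worth of agent~$1$'s items (size $\frac12$) and value $v(O_b)\approx\frac{1}{r-1}$ worth of agent~$2$'s items in the remaining size $\frac12$ — this forces the size of agent $2$'s items to be tuned so that value $\approx\frac{1}{r-1}$ fits in size $\frac12$ while their full optimal value $v(Z_2)$ is attained in size $1$; taking agent~2's items with value-to-size ratio $\frac{2}{r-1}$ and total size $1$ does it, giving $v(Z_2)=\frac{2}{r-1}>v(Z_1)=1$ — wait, I need $v(Z_b)=v(Z_2)$ to be the \emph{larger} of the two single-agent optima for option~2 to be the relevant branch, and I need $v(Z_2)<r\,v(OPT(X_1\cup X_2))/1$... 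I would instead just set the numbers so that $v(Z_2)=\frac{1}{r-1}-\epsilon$ and $v(Z_1)=1$, with agent 2's ratio chosen as $\frac{1}{(r-1)/2}$ roughly, then verify option~2 fires: $v(Z_2)\ge\frac1r v(OPT(X_1\cup X_2))$ must hold, and $v(Z_1)<\frac1\alpha(v(Z_1)+v(Z_2))$ must hold so we skip option~1; both are routine once $\frac{v(Z_2)}{v(Z_1)}$ is near $\frac{1}{r-1}$.

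Having fixed such an instance, the final step is to \emph{lower-bound} the value of any feasible randomized solution to PROGRAM on it and show it is at most $\frac{v(O)}{r-\delta}$ — or, more directly, to exhibit that the mechanism's output value on option~2 equals the optimum of PROGRAM, whose optimal value I claim is at most (roughly) $\frac{v(O)}{r}$ here. For this I would argue: any feasible $A$ for PROGRAM satisfies $\mathbb E[v(A\cap X_1)]=\mathbb E[v(A\cap X_2)]$ and $s(A)\le1$ surely, hence $\mathbb E[v(A)]=2\mathbb E[v(A\cap X_1)]\le 2\cdot(\text{max value of agent 1's items fitting alongside an equal-value chunk of agent 2})$, and the equal-value constraint together with the ratio asymmetry caps this at essentially $v(O)/r$. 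The main obstacle I anticipate is precisely this upper bound on PROGRAM's optimum: unlike the approximation \emph{guarantee} (which only needed one feasible $A$), tightness requires showing \emph{no} randomized feasible solution does better, so I would need a short LP-duality or exchange argument — e.g.\ bounding $\mathbb E[v(A)]$ by considering, for each realization $S$ of $A$, the quantity $v(S)$ versus $v(S\cap X_1)$ and $v(S\cap X_2)$ and using that the size budget forces $v(S\cap X_2)\le\frac{2}{r-1}v(S\cap X_1)$-type relations — and then let $\epsilon,\eta\to0$ to push the ratio above $r-\delta$.
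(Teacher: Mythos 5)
Your overall strategy---reverse-engineering the extremal instance from the two tightness conditions of the upper-bound analysis in Theorem~\ref{thm:equtilstratproof}, namely $v(O_a)=\frac{v(Z_a)}{2}$ and $x=\frac{v(O_b)}{v(Z_a)}$ pushed to the endpoint $\frac{1}{\alpha-1}$---is exactly the paper's, and the parameters you eventually converge on ($v(Z_1)=1$, $v(Z_2)=\frac{1}{r-1}-\epsilon$ packed into size $\frac{1}{2}$) are the right ones. The paper realizes them with just three items: $X_1=\{a,b\}$ with $v(a)=s(a)=1$, $v(b)=s(b)=\frac{1}{2}$, and $X_2=\{c\}$ with $v(c)=\frac{1}{r-1}-\epsilon$, $s(c)=\frac{1}{2}$. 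One slip to correct before anything else: the condition for reaching option~2 of EQUAL-UTILITY is that \emph{neither} agent satisfies $v(Z_i)\geq\frac{1}{\alpha}(v(Z_1)+v(Z_2))$; the test ``$v(Z_2)\geq\frac{1}{r}v(OPT(X_1\cup X_2))$'' you quote belongs to PACIFY-THE-LIAR. For your parameters the relevant check is $(r-1)v(Z_2)<v(Z_1)$, which holds precisely because of the $-\epsilon$.

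The genuine gap is the step you yourself flag as ``the main obstacle'': an upper bound on the optimum of PROGRAM over \emph{all} randomized feasible solutions. Since the mechanism returns that optimum, the theorem is exactly the assertion that this optimum is small, so the proof is incomplete without it. It is, however, easy to supply in either construction, and you should pick one and finish it. In the paper's three-item instance the feasible deterministic sets are only $\varnothing,\{a\},\{b\},\{c\},\{b,c\}$; shifting mass from $\{b\}$ or $\{c\}$ onto $\{b,c\}$ weakly increases both agents' utilities, so PROGRAM reduces to mixing $\{a\}$ with $\{b,c\}$, forcing $p=\frac{v(c)-1/2}{v(c)+1/2}$ and yielding the ratio $\frac{v(c)+1/2}{p+(1-p)(v(c)+1/2)}\rightarrow\frac{(\sqrt{2}+1)^2}{2\sqrt{2}+1}=r$ as $\epsilon\rightarrow 0$. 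In your ``many tiny items'' variant the bound is even cleaner: every feasible realization $S$ satisfies $s(S\cap X_1)+s(S\cap X_2)\leq 1$, i.e.\ $v(S\cap X_1)+\frac{r-1}{2}v(S\cap X_2)\leq 1$ up to item granularity; taking expectations and imposing $\mathbb{E}[v(A\cap X_1)]=\mathbb{E}[v(A\cap X_2)]=u$ gives $u\leq\frac{2}{r+1}$, hence $\mathbb{E}[v(A)]\leq\frac{4}{r+1}=3-\sqrt{2}$ while $v(O)\rightarrow\frac{1}{r-1}+\frac{1}{2}=\sqrt{2}+1$, and the ratio tends to $r$. Note that the inequality you gesture at, ``$v(S\cap X_2)\leq\frac{2}{r-1}v(S\cap X_1)$,'' is not the right form---it is the shared capacity constraint, not a pointwise ratio between the two intersections, that does the work.
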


\begin{proof}
Consider the profile where $X_1=\{a,b\}$ with $v(a)=s(a)=1$ and $v(b)=s(b)=\frac{1}{2}$, and $X_2=\{c\}$ where $v(c)=\frac{1}{r-1}-\epsilon$ where $\epsilon>0$ is small, and $s(c)=\frac{1}{2}$. EQUAL-UTILITY will reach option 2 on this instance. In this case, item $a$ is chosen with probability $p=\frac{\frac{1}{r-1}-\epsilon-\frac{1}{2}}{\frac{1}{r-1}-\epsilon+\frac{1}{2}}$. Items $b$ and $c$ are chosen with probability $1-p$. The mechanism's approximation ratio is $\frac{\frac{1}{r-1}-\epsilon+\frac{1}{2}}{p+(1-p)(\frac{1}{r-1}-\epsilon+\frac{1}{2})}$, which, as $\epsilon \rightarrow 0$, goes to $\frac{5+4\sqrt{2}}{7}$.
\end{proof}

\begin{theorem}
Solving PROGRAM is NP-hard.
\end{theorem}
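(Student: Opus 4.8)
The plan is to establish NP-hardness by a polynomial reduction from \textsc{Subset-Sum}: given positive integers $w_1,\dots,w_n$ with $W=\sum_i w_i$ and a target $t$ with $0<t<W$ (the cases $t\ge W$ being trivial, and assuming $n\ge2$ so each $w_i<W$), decide whether some sub-collection sums to exactly $t$. I would construct the PROGRAM instance as follows. Let $X_1$ consist of items $a_1,\dots,a_n$ with $s(a_i)=v(a_i)=w_i/W$, so that $s(X_1)=v(X_1)=1$ and (recall $C=1$) all of $X_1$ exactly fills the knapsack; let $X_2=\{c\}$ with $v(c)=1$ and $s(c)=1-t/W\in(0,1)$. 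Then a feasible set $A$ that contains $c$ has room for $X_1$-items of total size (equivalently, since $v=s$ on $X_1$, total value) at most $t/W$, i.e.\ a sub-collection of the $w_i$ summing to at most $t$; a feasible $A$ that omits $c$ may include all of $X_1$, giving $X_1$-value up to $1$. All sizes lie in $(0,1]$ and all values are positive, so this is a legal instance, and the construction is clearly polynomial.

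The first step in the analysis is to observe that the balance constraint $\mathbb{E}[v(A\cap X_1)]=\mathbb{E}[v(A\cap X_2)]$ makes the objective collapse: $\mathbb{E}[v(A)]=\mathbb{E}[v(A\cap X_1)]+\mathbb{E}[v(A\cap X_2)]=2\,\mathbb{E}[v(A\cap X_2)]=2v(c)\Pr[c\in A]=2q$, where $q=\Pr[c\in A]$, so PROGRAM on this instance is exactly ``maximize $q$ over feasible randomized solutions satisfying balance.'' Next I would bound $q$ by conditioning on whether $c\in A$. Given $c\in A$, feasibility forces $v(A\cap X_1)\le M^\ast/W$, where $M^\ast=\max\{\sum_{i\in I}w_i:\sum_{i\in I}w_i\le t\}$; given $c\notin A$, $v(A\cap X_1)\le v(X_1)=1$. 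Writing $m=M^\ast/W$ and using the balance constraint $q=\mathbb{E}[v(A\cap X_1)]$, this gives $q\le qm+(1-q)$, i.e.\ $q\le \tfrac{1}{2-m}$, hence optimum $\le \tfrac{2}{2-m}$; and this bound is attained by the randomized solution that with probability $q$ plays ``$c$ together with a best sub-collection of $X_1$'' and with probability $1-q$ plays ``all of $X_1$, no $c$'' (both feasible, the first because $(1-t/W)+M^\ast/W\le1$, the second because $s(X_1)=1$), for the right value of $q$.

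The crux — and the only subtle point — is the strict separation between the two cases, which rests on integrality of the $w_i$: in a \textsc{yes}-instance $M^\ast=t$, so $m=t/W$ and the optimum of PROGRAM equals $\tfrac{2}{2-t/W}$; in a \textsc{no}-instance the largest subset sum not exceeding $t$ is an integer strictly below $t$, hence $M^\ast\le t-1$, $m\le (t-1)/W$, and the optimum is at most $\tfrac{2}{2-(t-1)/W}<\tfrac{2}{2-t/W}$. Therefore the decision ``is the optimum of PROGRAM at least $\tfrac{2}{2-t/W}$?'' is equivalent to the \textsc{Subset-Sum} instance being a \textsc{yes}-instance, so solving PROGRAM is NP-hard. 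I expect the delicate part to be arguing cleanly that a \emph{conditional expectation} of $v(A\cap X_1)$ given $c\in A$ cannot exceed the pointwise attainable maximum $M^\ast/W$ (so that randomization within that event buys nothing beyond $M^\ast/W$), which is what forces the gap; the collapse of the objective and the feasibility of the extremal solution are routine. One could equivalently reduce from \textsc{Partition} by taking $t=W/2$.
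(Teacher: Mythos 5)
Your proof is correct, and it differs from the paper's in both the source problem and the separation argument, even though the skeleton is the same (agent $2$ gets a single item, and the balance constraint caps PROGRAM's objective at twice agent $2$'s expected contribution). The paper reduces from the knapsack decision problem ``is $v(OPT^*)\ge k$?'': it gives agent $2$ one item of value exactly $k$ and size $\tfrac12$, so the optimum of PROGRAM is at most $2k$, and it equals $2k$ iff agent $2$'s item can be taken with probability $1$ while agent $1$ still accumulates expected value $k$ --- i.e.\ iff the knapsack optimum reaches $k$. The threshold is thus encoded in the \emph{value} of agent $2$'s item, and the yes/no separation is the clean dichotomy ``optimum $=2k$ versus $<2k$.'' You instead reduce from \textsc{Subset-Sum}, encode the target in the \emph{size} of agent $2$'s item (so that taking $c$ leaves room for $X_1$-value at most $t/W$), and you compute the optimum exactly as $\tfrac{2}{2-m}$ with $m=M^*/W$, using integrality of the $w_i$ to force a quantifiable gap between yes- and no-instances. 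Your route requires a slightly more careful analysis (the exact formula for the optimum as a function of $m$, plus the integrality gap $M^*\le t-1$), but it buys a fully explicit characterization of the optimum and of the optimal two-point distribution, whereas the paper only needs the one-sided statement that the optimum falls strictly below $2k$ in the no-case. The ``delicate point'' you flag --- that the conditional expectation of $v(A\cap X_1)$ given $c\in A$ cannot exceed $M^*/W$ --- is in fact immediate, since the bound holds pointwise on that event (any feasible realization containing $c$ packs $X_1$-items of total integer weight at most $t$, hence at most $M^*$), so no further argument is needed there.
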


\begin{proof}
We prove this by reduction from knapsack. Say we have an instance of the knapsack problem with set of items $I$. Assume without loss of generality that the knapsack's capacity in this instance is $\frac{1}{2}$ and that the sizes of the items are at most $\frac{1}{2}$ each. Let an optimal solution be $OPT^*$; we want to know whether or not $v(OPT^*) \geq k$ for some $k > 0$. Set $X_1=I$, and $X_2=\{a\}$ where $v(a)=k$ and $s(k)=\frac{1}{2}$, and solve PROGRAM on this instance. Note that agent $2$'s expected utility can never surpass $k$, as she only has one item and its value is $k$. Thus, the optimal value of $PROGRAM$ is at most $2k$, since the utilities must be equal. We claim that the optimal value of $PROGRAM$ is exactly $2k$ iff $v(OPT^*) \geq k$.
\begin{enumerate}
\item If $v(OPT^*) \geq k$, then there is a unique solution $A$ to PROGRAM where $A \in \{OPT^* \cup \{a\},\{a\}\}$ surely (as $v(X_1 \cap (OPT^* \cup \{a\}))=v(OPT^*)\geq k=v(X_2 \cap (OPT^* \cup \{a\}))$). Since $a$ is chosen with probability $1$, agent 2 gets an expected utility of exactly $k$ here, and hence so does agent 1. Thus, $\mathbb{E}[v(A)]=2k$, and therefore the optimal value of PROGRAM is at least $2k$, and thus is exactly $2k$.
\item If $v(OPT^*) < k$, note that whenever $S \subseteq X_1 \cup X_2$ where $v(S) \leq 1$, if $a \in S$, then $v(S \cap X_1) \leq  v(OPT^*) < k$ (since if $a \in S$, there is only capacity $\frac{1}{2}$ left for agent $1$'s items). Thus, it follows that for every solution $A$ of PROGRAM, there is a nonzero probability that $a \notin A$ (otherwise agent $1$'s expected utility must be strictly less than $k$, and agent $2$'s expected utility is exactly $k$). Thus agent $2$'s expected utility is strictly less than $k$, and therefore the optimal value of PROGRAM is strictly less than $2k$.
\end{enumerate}
\end{proof}

\end{document}